\documentclass[11pt]{article}
\usepackage{fullpage}

\usepackage{latexsym}
\usepackage{amsmath}
\usepackage{amssymb}
\usepackage{amsthm}
\usepackage{hyperref}
\usepackage{cite}
\usepackage{graphicx}
\usepackage{color}
\usepackage{enumerate}

\usepackage{tikz}
\usetikzlibrary{arrows,decorations.pathmorphing,backgrounds,positioning,fit,petri}
\usepackage{caption}
\usepackage{subcaption}
\usepackage{float}


\newtheorem{theorem}{Theorem}
\newtheorem{theorem*}{Theorem}

\newtheorem{lemma}[theorem]{Lemma}

\newtheorem{proposition}[theorem]{Proposition}

\theoremstyle{definition}


\newcommand{\F}{\mathbb{F}}

\newcommand{\ctabc}{\mathcal{C}}

\newcommand{\Dom}{\mathrm{Dom}}

\newcommand{\inp}{\mathrm{in}}
\newcommand{\out}{\mathrm{out}}
\newcommand{\mi}{\mathrm{middle}}

\newcommand{\mdeg}{\mathrm{mdeg}}
\newcommand{\cdeg}{\mathrm{cdeg}}

\newcommand{\Pol}{\mathrm{P}}
\newcommand{\NP}{\mathrm{NP}}
\newcommand{\coNP}{\mathrm{coNP}}

\newcommand{\cS}{\mathcal{S}}

\newcommand{\PPA}{\mathrm{PPA}}
\newcommand{\PPAD}{\mathrm{PPAD}}

\newcommand{\Pe}{\mathrm{P}}

\newcommand{\suppress}[1]{}

\newcommand{\miklos}[1]{\textcolor{green}{\small$\bullet$\footnote{\textcolor{green}{Miklos:
 #1}}}}

\newcommand{\alex}[1]{\textcolor{violet}{\small$\bullet$\footnote{\textcolor{violet}{Alex: #1}}}}

\title{On the Polynomial Parity Argument complexity of the Combinatorial Nullstellensatz
}

\author{
Aleksandrs Belovs\thanks{Faculty of Computing, University of Latvia ({\tt stiboh@gmail.com}).}
\and
G\'abor Ivanyos\thanks{Institute for Computer Science and Control, Hungarian Academy of Sciences, 
Budapest, Hungary 
({\tt Gabor.Ivanyos@sztaki.mta.hu}).} 
\and
Youming Qiao\thanks{Centre for Quantum Computation and Intelligent Systems, 
 University of Technology Sydney, Australia  ({\tt jimmyqiao86@gmail.com}).}
\and 
Miklos Santha\thanks{IRIF, Universit\'e Paris Diderot, CNRS, 75205 Paris, France;  and
Centre for Quantum Technologies, National University of Singapore and MajuLab, CNRS,
Singapore 117543 ({\tt santha@irif.fr}).}
\and 
Siyi Yang\thanks{
Centre for Quantum Technologies, National University of Singapore,
Singapore 117543 ({\tt ysyshtc@gmail.com}).}
}

\begin{document}
\maketitle

\begin{abstract}
The 
complexity class PPA consists of NP-search problems which are reducible to
the parity principle in undirected graphs. 
It contains a wide variety of interesting problems from graph theory,
combinatorics, algebra and number theory, but only a few of these are known to be 
complete in the class. 
Before this work, the known complete problems were all discretizations or 
combinatorial analogues of topological fixed point theorems.

Here we prove the PPA-completeness of two  problems of radically different style.
They are \textsc{PPA-Circuit CNSS} and \textsc{PPA-Circuit Chevalley}, related respectively to the Combinatorial Nullstellensatz 
and to the Chevalley-Warning Theorem over the two elements field $\F_2$.
The input of these problems contain PPA-{\em circuits}  which are arithmetic
circuits with special symmetric properties that assure %
that the polynomials computed by them have always an even 
number of zeros.
In the proof of the result we relate the multilinear degree of the polynomials to the parity of the {\em maximal parse subcircuits} that compute monomials
with maximal multilinear degree, and we show that the maximal parse subcircuits of a PPA-circuit can be paired
in polynomial time.

\end{abstract}

\newpage

\section{Introduction}
\label{sec:intro}

\paragraph{The class PPA.} The complexity class TFNP~\cite{MP91} consists of 
$\NP$-search problems corresponding to total relations.
In the last 25 years various subclasses of TFNP have been thoroughly investigated. The polynomial
parity argument classes $\PPA$ and $\PPAD$ were defined in the seminal work of Papadimitriou~\cite{Pap90}.
$\PPA$ consists of the search problems which are reducible to the parity principle stating that
in an undirected  graph the number of odd vertices is even. The more restricted class $\PPAD$ is based on the analogous
principle for directed graphs.

The class $\PPAD$ contains a relatively large number of complete problems from various areas of mathematics.
In his paper Papadimitrou~\cite{Pap90}
has already shown that among others the 3-dimensional \textsc{Sperner}, \textsc{Brouwer}
problems, as well as the \textsc{Exchange Equilibrium} problem from mathematical economics were $\PPAD$-complete.
A few years later Chen and Deng~\cite{CD06} proved 
 that 2-dimensional \textsc{Sperner} was also $\PPAD$-complete,
and after a sequence of beautiful papers Chen and Deng~\cite{CDb06} has established 
the $\PPAD$-completeness of computing 2-player Nash equilibrium, see also~\cite{CDT09}.
Kintali~\cite{Kin} has compiled a list of 25 $\PPAD$-complete problems; 
the list is far from complete.

In comparison with $\PPAD$, relatively few complete problems are known in the class $\PPA$, all of which are
discretizations or combinatorial analogues  of topological fixed point theorems.
While the original paper of Papadimitriou~\cite{Pap90} exhibited a large collection of problems in $\PPA$, 
none of them was proven to be PPA-complete.
Historically the first PPA-completeness result was given by Grigni~\cite{Gri01}
who,
realizing that analogues of $\PPAD$-complete problems in non-orientable spaces could 
become PPA-complete, has shown the PPA-completeness of 
the \textsc{Sperner} problem for a non-orientable 
3-dimensional space.
This result was strengthened by Friedl et al.~\cite{KISV06} to a non-orientable and locally 2-dimensional space.
Up to 
 our knowledge, until 2015 just these two problems were known to be PPA-complete.
Last year Deng et al.~\cite{DEFLQX16} established the PPA-completeness of several 2-dimensional problems 
on the M\"obius band,
including \textsc{Sperner} and \textsc{Tucker}, and they have obtained similar results 
for the Klein bottle and the projective plane.
Recently Aisenberg, Bonet and Buss~\cite{ABB15} have shown that 
2-dimensional \textsc{Tucker} in the Euclidean space was PPA-complete.

Compared to the fundamental similarity of these complete problems in PPA, the list of problems in the class
for which no completeness result is known is very rich. Already in Papadimitriou's paper~\cite{Pap90}
we find problems from 
graph theory, such as \textsc{Smith} and \textsc{Hamiltonian decomposition}, from
combinatorics, such as \textsc{Necklace splitting} and \textsc{Discrete Ham 
sandwich} (the proof in~\cite{Pap94} that these problems are in $\PPAD$ was 
incorrect~\cite{ABB15}), and from algebra,
a variant of Chevalley's theorem over the 2 elements field $\F_2$, which we call \textsc{Explicit Chevalley}.
Cameron and Edmonds~\cite{CE99} gave new proofs based on the parity principle for a long series of theorems 
from graph theory~\cite{Toi73, Wes78, BH86, Ber86, CE90}, the corresponding search problems are therefore in PPA.
Recently Je\v{r}\'abek~\cite{Jer16} has put several number theoretic problems, such as
square root computation  and finding quadratic nonresidues modulo $n$ into PPA, and he has also shown that
\textsc{Factoring}  is in PPA under randomized reduction. 

\paragraph{Our contribution.} The main result of this paper is that two 
appropriately defined problems
related to Chevalley-Warning Theorem~\cite{Ch36, W36} and to Alon's Combinatorial Nullstellensatz~\cite{Alon99}
over $\F_2$
are complete in PPA. 
These are 
the first PPA-completeness results involving problems which are not
inspired by topological fixed point theorems.

The Chevalley-Warning Theorem is a classical result about zeros of polynomials.
It says that if $P_1, \ldots , P_k$ are $n$-variate polynomials over a field of
characteristic $p$ such that the sum of their degrees is less than $n$, then the number of common zeros is 
divisible by $p$.
The Combinatorial Nullstellensatz (CNSS) of Alon  states that if $P$ is an $n$-variate polynomial over $\F$
whose degree is 
$d_1 + \cdots + d_n$, and this is certified by the monomial
$c x_1^{d_1} \cdots x_n^{d_n}$, for some $c\neq 0$, then in $S_1 \times \cdots \times S_n \subseteq \F^n$ there exists a point
where $P$ is not zero, whenever $|S_i| > d_i$, for $i = 1, \ldots, n$. The CNSS has found a wide range of applications
among others in graph theory, combinatorics and additive number theory~\cite{Alon99, Alon02}. 

Over the field $\F_2$
the two theorems greatly simplify via the notion of {\em multilinear degree}. For any polynomial $P$ over $\F_2$, there exists
a unique multilinear polynomial $M$ such that $P$ and $M$ compute the same
function on $\F_2^n$. 
We call the degree of $M$ the
multilinear degree of $P$, denoted as $\mdeg(P)$. We  
use $\deg(P)$ to denote the usual degree of $P$. Then the Chevalley-Warning 
Theorem and the CNSS over $\F_2$  are equivalent to the following 
statement:
\begin{center}
{\em An $n$-variate $\F_2$-polynomial  has an odd number of zeros if and only if its multilinear degree is $n$.}
\end{center}
The natural search problem corresponding to the CNSS therefore is: given an $n$-variate  
polynomial $P$ whose multilinear degree is $n$, find
a point $a$ where $P(a)=1$. Similarly, the search problem corresponding to the Chevalley-Warning Theorem is: 
given an $n$-variate polynomial $P$ whose multilinear degree is less than $n$ and a zero of $P$, find another zero.

\suppress{
Obviously, these problems are not yet well defined algorithmically, since it is not specified, how the polynomial $P$ is given.
The starting point of our investigations is the result of Papadimitriou about the instantiation of the Chevalley-Warning
Theorem.
Let the polynomial $P+1$ be represented
by a $\Pi_3$-arithmetic formula, that is a formula which is the product of sums of products. In other words,
$P+1$ is the product of explicitly given polynomials $P_1+1, \ldots, P_k+1$. 
Clearly, $P(x) =0$ if and only if $P_i(x) =0$, for $i \in [n]$.
\miklos{In the previous sentences I added 1 to the polynomials, this was previously incorrect. 
The next sentence is new}
In that case the degree of the arithmetic formula is the same as the degree of $P$, they are both equal to the 
sum of the degrees of the polynomials $P_1, \ldots, P_k$.
Suppose that this degree is less then $n$,
and suppose that we are given $a \in \F_2^n$ such that $P(a) = 0$.
Then the task is to find $a' \neq a$ such that $P(a') = 0$. We call this problem \textsc{Explicit Chevalley}, and
Papadimitriou has shown~\cite{Pap90} that it is in PPA.
}

Obviously, these problems are not yet well defined algorithmically, since it is not specified, how the polynomial $P$ is given.
The starting point of our investigations is the result of Papadimitriou about some instantiation of the Chevalley-Warning
Theorem. Specifically, in \cite{Pap90} Papadimitriou considered the following 
problem. 
Let the polynomials $P_1, \ldots, P_k$ be given explicitly as sums of monomials,
and define $P(x) = 1+  \prod_{i=1}^k (P_i(x) + 1)$. 
We have then $\deg(P) = \sum_{i=1}^k \deg (P_i)$, and 
clearly $P(x) =0$ if and only if $P_i(x) =0$, for $i \in [n]$.
Suppose that $\deg (P) < n$, and that 
we are given $a \in \F_2^n$ such that $P(a) = 0$.
Then the task is to find $a' \neq a$ such that $P(a') = 0$. We call this problem \textsc{Explicit Chevalley}, and
Papadimitriou has shown~\cite{Pap90} that it is in PPA.

Could it be that \textsc{Explicit Chevalley} is PPA-complete? We find this highly unlikely. There are
two restrictions on the input of \textsc{Explicit Chevalley}. 
Firstly, the polynomial $P$ is given by an arithmetic 
circuit (in fact by an arithmetic formula) of specific form.
Secondly, 
the number of variables not only upper bounds the multilinear degree of $P$, but also
the degree of $P$.
The first restriction can be easily  relaxed.
We can define and compute recursively very easily 
the circuit degree (also known as the formal degree; see Section~\ref{subsec:ckt}) 
of 
the 
arithmetic 
circuit which is an upper bound on the degree of the polynomial computed by the circuit.
Could it be that the problem, specified by an arithmetic circuit whose circuit degree 
is less than $n$,  becomes PPA-complete?
While this problem might be indeed harder than \textsc{Explicit Chevalley}, we still don't think that it is PPA-complete.

We believe that the more important restriction in Papadimitriou's problem is the one on the degree of the polynomial $P$
computed by the input circuit.
As we have seen, to have an even number of zeros, 
mathematically it is only required that the multilinear degree of $P$ is less than $n$, so putting the restriction
on the degree of $P$
is too stringent. Let's try then to consider instances specified by arithmetic circuits computing polynomials of 
multilinear degree less than $n$. 
However, here we face a serious difficulty. 
We can't just promise that the polynomial has multilinear degree less than $n$ since $\PPA$ 
is a syntactic class.
We must be able to verify syntactically that it is indeed the case. 


The multilinear degree of the polynomial is decided by the parity of the monomials computed by the circuit 
which contain every variable.
Let us call such monomials \emph{maximal}.
Indeed, the multilinear degree of $P$ is less than $n$ if and only if an even number
of maximal monomials are computed by the circuit. 
A very general way to prove efficiently that a set is of even cardinality is to give a 
polynomial Turing machine which computes a perfect matching on the elements of the set. However, 
the parsing of monomials in arbitrary arithmetic circuits is a rather complex task~\cite{Mal03}.
For a start, the number of maximal monomials computed by a polynomial size arithmetic
circuit 
can be doubly exponential,
making even the description of such a monomial impossible in polynomial time. Fortunately, the situation over the field $\F_2$
simplifies a lot, thanks to cancellations due to certain symmetries.
In fact, we are able to show that over $\F_2$ it is sufficient to consider only
those monomials which are computed by consistent left/right labellings of the sum gates 
participating in the computation of the monomial,
because the rest of the monomials cancel out. 
We call such labellings
{\em parse subcircuits}, and we call those parse subcircuits
which compute maximal monomials {\em maximal}. 
The introduction of parse subcircuits was 
inspired by the concept of parse trees in~\cite{JS82, MP08}.
Technically, this results shows that that computing the multilinear degree is in 
$\oplus \Pe$, the complexity class Parity P.


Is there a chance that for a general circuit computing the multilinear degree is in P?
As it turns out not, unless $\oplus \Pe = \Pe$, because we can show that computing the multilinear degree is also 
$\oplus \Pe$-hard.
Therefore we have to identify a restricted class of circuits computing polynomials of even multilinear degree which satisfy two properties:
the class is on the one hand restricted enough that we are able to construct a polynomial time perfect matching 
for the maximal parse subcircuits, but it is also large enough that finding another zero for the circuit is PPA-hard. 
The main contribution of this paper is that we identify such a class of arithmetic circuit which we call {\em $\PPA$-circuits}.

The definition of these circuits is inspired by a rather straightforward 
translation of  Papadimitriou's basic
$\PPA$-problem into a problem for arithmetic circuits. In a nutshell,
the basic $\PPA$-problem is the following. Given a degree-one vertex of 
a graph, in which every vertex has degree at most two, find another
degree-one vertex.  Here, the graph, whose vertices are the $0$-$1$ strings
of given length, is 
given via a
polynomial time Turing machine $M$ determining the neighbourhood
of any specified node. We construct an arithmetic circuit
over $\F_2$ which, given a vertex $v$ in this graph, computes 
the opposite parity of the number of $v$'s neighbours. 
Therefore, finding another degree-one vertex is then just the same
as finding another zero of the polynomial computed by the circuit.  
Most importantly, the circuit is constructed to be in a special form, which 
allows for a polynomial-time-computable perfect 
matching over its maximal parse subcircuits. 
Roughly speaking, from the Turing machine $M$ that describes the neighbours of 
vertices, we extract two arithmetic circuits $D$ and $F$ that also describe 
the neighbours in a certain way. We then define the so-called PPA-{\em 
composition} of these two circuits, which 
produces 
a circuit $C$ that accesses $D$ 
and $F$ in a black box fashion.
Symmetries of the PPA-composition, reflecting the special structure of degree
computation, enable us to construct a polynomial-time-computable perfect 
matching over its maximal parse subcircuits (cf. Lemma~\ref{lem:matching}).
Finally we define a PPA-{\em circuit} as the sum of a PPA-composition and another 
circuit whose
circuit degree is less than $n$. This is just a minor extension of the family of 
PPA-compositions since circuits with degree less than $n$
don't have maximal parse subcircuits. The reason for considering this extended 
family is that this way our result immediately generalizes
Papdimitrou's result~\cite{Pap90} about \textsc{Explicit Chevalley}, and it makes 
also easier to express the equivalence 
between the algorithmic versions of the Chevalley-Warning theorem and the CNSS.

The definition of our two problems, \textsc{PPA-Circuit-CNSS} and \textsc{PPA-Circuit-Chevalley}, is therefore the following.
In both cases we are given an $n$-variable, PPA-circuit $C$ over $\F_2$ and an element $a \in \F_2^n$.
In the case of \textsc{PPA-Circuit Chevalley}, $a$ is a zero of $C$, and 
for \textsc{PPA-Circuit CNSS}, we consider the sum of the circuits $C$ and $L_a$, where $L_a$ is
a simple {\em Lagrange-circuit} having $a$ as its only zero and having a single
maximal parse subcircuit.
The computational task is to compute another zero of $C$ in case of \textsc{PPA-Circuit Chevalley}, and a satisfying
assignment for $C + L_a$ in case of \textsc{PPA-Circuit CNSS}.
Our result is then stated in the following
theorem.
\begin{theorem}
\label{thm:main}
The problems \textsc{PPA-Circuit CNSS} and \textsc{PPA-Circuit Chevalley} are $\PPA$-complete.
\end{theorem}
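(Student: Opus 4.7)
The plan is to prove $\PPA$-membership and $\PPA$-hardness separately, exploiting that the Lagrange-circuit twist makes the two problems essentially equivalent: an assignment $x$ satisfies $C + L_a$ precisely when either $x = a$ and $C(a) = 1$, or $x \neq a$ and $C(x) = 0$, so a Chevalley instance $(C, a)$ (with $C(a)=0$) becomes a CNSS instance whose nontrivial solutions are exactly the other zeros of $C$, and the symmetric translation works in the reverse direction. I will focus on \textsc{PPA-Circuit Chevalley}; the argument for \textsc{PPA-Circuit CNSS} is parallel once the contribution of the single maximal parse subcircuit of $L_a$ is accounted for.

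For $\PPA$-membership, the matching lemma (Lemma~\ref{lem:matching}) provides a polynomial-time perfect matching on the maximal parse subcircuits of $C$, which forces $\mdeg(C) < n$ and guarantees that the number of zeros of $C$ is even. To turn this parity statement into a $\PPA$ reduction I would build an undirected graph whose vertices are pairs $(x, \pi)$ consisting of an assignment $x \in \F_2^n$ together with a parse subcircuit $\pi$ of $C$ whose computed monomial is consistent with $x$, plus designated tags for the distinguished endpoints. Two kinds of edges are imposed: the involution on maximal parse subcircuits from Lemma~\ref{lem:matching}, pairing $(x, \pi)$ with $(x, \pi')$ at a common $x$; and, for non-maximal $\pi$, a canonical bit-flip involution on variables absent from the monomial of $\pi$, which cancels contributions not captured by the multilinear reduction. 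Leaves of this graph correspond exactly to zeros of $C$, so the input zero $a$ becomes the input leaf and any other leaf yields a new zero.

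For $\PPA$-hardness I would reduce from the canonical $\PPA$ problem on a graph of maximum degree two represented by a polynomial-time Turing machine $M$. From $M$ I extract arithmetic circuits $D$ and $F$ that describe, respectively, the two candidate neighbours of an input vertex $v$, and I form the PPA-composition $C$ of $D$ and $F$; this composition is arranged so that $C(v)$ records the parity of the number of neighbours of $v$ in the graph described by $M$, offset so that $C(v) = 0$ precisely when $v$ has odd degree, i.e.\ is a leaf. The symmetric structure of the composition lets me apply Lemma~\ref{lem:matching} and conclude that $C$ is a genuine PPA-circuit; the input leaf $a$ then produces a valid instance $(C, a)$ of \textsc{PPA-Circuit Chevalley}, and any second zero of $C$ pulls back to another leaf of the original graph.

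The main obstacle is the hardness direction, where the PPA-composition must simultaneously (i) encode an arbitrary polynomial-time neighbour relation, requiring enough expressive power to simulate the Turing machine $M$, and (ii) preserve a syntactic symmetry strong enough for the parse-subcircuit matching of Lemma~\ref{lem:matching} to apply. Since the set of maximal parse subcircuits of a general arithmetic circuit can be doubly exponential and combinatorially intricate, the pairing must be produced by an explicit local involution built into the composition, essentially swapping the roles of $D$ and $F$, rather than by enumeration. A subsidiary difficulty in the membership direction is that the $\PPA$ graph has to be described implicitly through oracle calls to the matching procedure, since the parse-subcircuit vertex set cannot be listed in polynomial time.
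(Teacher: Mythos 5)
Your overall architecture coincides with the paper's: interreduce the two problems via the Lagrange circuit and complementation, prove membership by building a graph on assignments and parse subcircuits whose odd-degree vertices are the desired solutions, and prove hardness by extracting circuits $D$ and $F$ from the Turing machine and forming the PPA-composition. The hardness sketch is essentially the paper's Theorem~\ref{thm:hardness} (modulo a complementation: the composition has $C_{D,F}(u)=1$, not $0$, exactly at the odd-degree vertices $u$). There is also a sign slip in your equivalence claim: $(C\oplus L_a)(x)=1$ if and only if $x=a$ and $C(a)=0$, or $x\neq a$ and $C(x)=1$; what you wrote describes the \emph{zeros} of $C\oplus L_a$, and the actual reduction must first replace $C$ by $C\oplus 1$, as in Proposition~\ref{prop:equivalent}. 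These points are repairable.

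The genuine gap is in the membership direction. The paper's graph has vertex set $\F_2^n\cup\cS(C)$ with an edge $\{a,S\}$ whenever $m_S(a)=1$; by Proposition~\ref{prop:char2} the degree of an assignment vertex $a$ then has parity $C(a)$, which is exactly what makes the odd-degree vertices the solutions. Since these degrees are exponential, placing the problem in $\PPA$ requires, for every assignment $a$, a polynomial-time pairing of the parse subcircuits $S$ with $m_S(a)=1$ --- all of them paired when $C(a)=0$, and all but one when $C(a)=1$. This is the heart of the easiness proof (Lemma~\ref{whatever} together with the construction of $\phi(a,\cdot)$: locate the first sum gate $g$ in a topological order with $P_g(a)=0$ inside $S$, and swap the portion of $S$ below $g$ for the unique parse subcircuit of the other child all of whose gates evaluate to $1$), and your proposal contains no analogue of it. Worse, the graph you do describe does not have the right parities: if the vertices are consistent pairs $(x,\pi)$ and the only edges are the matching on maximal parse subcircuits and one canonical bit-flip involution for non-maximal $\pi$, then every vertex $(x,\pi)$ with $\pi$ non-maximal has exactly one incident edge, so all such pairs are leaves and the reduction to \textsc{Leaf} collapses. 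You need the bipartite formulation with assignments and parse subcircuits as separate vertices, and you must supply the assignment-side pairing.
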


Since the two problems are easily interreducible, for the proof of
Theorem~\ref{thm:main} we will show that \textsc{PPA-Circuit CNSS} is PPA-easy and \textsc{PPA-Circuit Chevalley} is 
PPA-hard. 
For the easiness part we define a graph,
inspired by Papadimitriou's construction,
whose vertices are the assignments for the variables and the parse subcircuits. There is an edge
between a parse subcircuit and an assignment if the monomial defined by the subcircuit takes the value 1 on
the assignment.
In addition, we also put an edge between two maximal parse subcircuits of the PPA-composition part of the circuit
if they are paired by the perfect matching.
As it turns out, the odd degree vertices in this graph are exactly the assignments where the polynomial defined by the circuit
is 1, and the unique maximal parse subcircuit of the Lagrange-circuit.
Technically, the main part of the proof is to give,
for every assignment, a polynomial time computable pairing between its exponentially many 
neighboring parse subcircuits. For the hardness part (which is much simpler to prove) we express the
basic PPA-complete problem as a PPA-composition, as we explained above.

\paragraph{Previous work.} 
Papadimitriou has proven that \textsc{Explicit Chevalley}
is in PPA.
Varga~\cite{Varga14} has shown the same for 
the special case of \textsc{CNSS} 
where the input polynomial $P$ is specified as the sum of a polynomial number of polynomials $P_i$, 
where each $P_i$ is the product of explicitly given polynomials whose sum of degrees is at most $n$.
In addition, the input also contains a polynomial time computable matching for all but one of the monomials $x_1 \cdots x_n$ of $P$.
However, the paper doesn't address the question why this doesn't make the problem a promise problem. 
Concerning the hardness of \textsc{CNSS}, Alon proved in ~\cite{Alon02} the 
following result. Let $P$ be
specified by an arithmetic circuit in a way that it can be checked efficiently that its multilinear degree is $n$. If a polynomial
time algorithm can find a point $a$ where $P(a) = 1$, then there are no one-way permutations.

\paragraph{Structure of the paper.} 
In {\bf Section~\ref{sec:prelim}} we recall the definition of the class PPA, the 
Combinatorial Nullstellensatz and
the Chevalley-Warning Theorem, and arithmetic circuits. In {\bf 
Section~\ref{sec:parse_sub}} we define the
parse subcircuits of an arithmetic circuit over $\F_2$, and in {\bf Proposition~\ref{prop:char2}} we prove that the polynomial
computed by the circuit is the sum of the monomials computed by the 
parse 
subcircuits. In {\bf Section~\ref{sec:ppa-circuit}} we define PPA-circuits, and in
Lemma~\ref{lem:matching} we prove that in such circuits 
a perfect matching for the maximal parse subcircuits can be computed in polynomial time.
In {\bf Section~\ref{sec:problems}}
we state the problems \textsc{PPA-Circuit CNSS} and \textsc{PPA-Circuit Chevalley} over $\F_2$ and observe
that they are polynomially interreducible. In {\bf Section~\ref{sec:easiness}} in 
{\bf Theorem~\ref{thm:easiness}} we prove that \textsc{PPA-Circuit CNSS} is in PPA, and in
{\bf Section~\ref{sec:hardness}} in {\bf Theorem~\ref{thm:hardness}} we prove that \textsc{PPA-Circuit Chevalley} is PPA-hard.

\section{Preliminaries}
\label{sec:prelim}
\subsection{Total functional $\NP$ and the class $\PPA$}
\label{sec:prelim_ppa}
We denote the set $\{1, \ldots, n\}$ by $[n]$. 
A polynomially computable binary relation $R \subseteq \{0,1\}^* \times \{0,1\}^*$ is called balanced if
for some polynomial $p(n)$, for every $x$ and $y$ such that $R(x,y)$ holds, we
have $|y|\leq p(|x|)$. Such a relation defines an {\em $\NP$-search problem} $\Pi_R$ whose input is $x$, 
and the task is to find for inputs $x$, where $R(x,y)$ holds for some $y$, such a {\em solution} $y$,
and report ``failure" otherwise.
The class FNP of {\em functional} NP consists of NP-search problems.
For two problems $\Pi_R$ and $\Pi_S$ in FNP, we say that $\Pi_R$ is {\em reducible to}
$\Pi_S$ if there exist two functions $f$ and $g$ computable in polynomial time such that for every positive $x$, 
$S(f(x),y)$ implies $R(x,g(x,y))$. 

An NP-search problem is
{\em total} if for every $x$, there exists a solution $y$. The class of these problems is
called TFNP (for Total Functional NP) by Megiddo and Papadimitriou~\cite{MP91}.
Problems in TFNP exhibit very interesting complexity properties. An FNP-complete
search problem can not be total unless $\NP = \coNP$. It is also unlikely that every problem in TFNP could be solved
in polynomial time since this would imply $\Pol = \NP \cap \coNP$.
TFNP is a semantic complexity class, in the sense that it involves a promise about the totality of the relation $R$.
It is widely believed that such a promise can not be enforced syntactically on a Turing machine, 
in fact there is no known recursive enumeration
of Turing machines that compute total search problems. As usual with semantic complexity classes, TFNP 
doesn't seem to have complete problems. On the other hand, several syntactically defined subclasses of
TFNP  with a rich structure of complete problems have been identified along the lines of the mathematical proofs establishing the 
totality of the defining relation.

The parity argument subclasses of TFNP were defined by Papadimitriou~\cite{Pap90, Pap94}.
They
can be specified via concrete problems, by closure under reduction. The
\textsc{Leaf} problem is defined as follows. The input is a triple $(z,M,\omega)$ where $z$ is a binary string and 
$M$ is the description of a
polynomial time Turing machine\footnote{The requirement for $M$ to run in 
polynomial can be imposed by adding a clock.}
that defines a graph $G_z = (V_z,E_z)$
as follows. The set of vertices 
is $V_z = \{0,1\}^{p(|z|)}$ for some polynomial $p$.
For any vertex $v \in V_z$, the machine $M$ outputs on $(z,v)$ a set of at most two vertices. 
Then, we define $G_z$ as a graph without self-loops, where  $\{v, v'\} \in E_z$ for $v \neq v'$, 
if $v' \in M(z,v)$ and $v \in M(z, v')$. 
Obviously $G_z$ is an undirected graph where the degree of each vertex is at most 2, and therefore the number of 
leaves, that is of degree one vertices, is even. Finally $\omega \in V_z$ is a degree one vertex that we call the {\em standard leaf}.
The output of the problem \textsc{Leaf} is a leaf of $G_z$ different from the standard leaf.
The Polynomial Parity Argument class PPA is the set of total search problems reducible to  \textsc{Leaf}.
The directed class PPAD is defined by \textsc{D-Leaf}, the directed analog of \textsc{Leaf}.
In the problem \textsc{D-Leaf} the Turing machine defines a directed graph, where the indegree and outdegree of every
vertex is at most one. 
The standard leaf
$\omega$ is
a source, and the output is a sink or
source different from the $\omega$. 

As shown in \cite{Pap94}, the definition of $\PPA$ can capture also those problems 
for which the underlying 
graph has unbounded degrees and we are seeking for another odd-degree vertex.
 Specifically, suppose there exists a
polynomial time {\em edge recognition} algorithm $\epsilon(v,v')$, 
which 
decides whether $\{v,v'\} \in E_z$. 
Assume also, that in addition we have a polynomial time {\em pairing function} 
$\phi(v,w)$, where by definition, for every vertex $v$, 
the function $\phi(v, \cdot)$ 
satisfies the following properties.
For every even degree vertex $v$, 
it is a pairing between the vertices adjacent to $v$, that is 
for every such vertex $w$, 
we have $\phi(v,w) = w'$,
where $w' \neq w$, $w'$ is also adjacent to $v$, and $\phi(v,w') = w$. For odd 
degree vertices $v$, we have exactly one 
adjacent vertex $w$ such that $w$ is mapped to itself, and on the remaining 
adjacent vertices it is pairing
as in the case of an even degree vertex $v$. The input also contains an odd degree 
vertex $v$ with a proof for that, in the form
of an adjacent vertex $w$, such that $\phi(v,w) = w$. 
In \cite[Corollary to 
Theorem~1]{Pap94}, Papadimitriou showed that any problem defined in terms of an 
edge recognition algorithm and a pairing 
function is in $\PPA$.


\subsection{Combinatorial Nullstellensatz and Chevalley-Warning Theorem}

Let $\F$ be a field. An {\em polynomial over $\F$}
(or shortly a polynomial) in $n$ variables is a formal expression 
$P(x) = P(x_1, \ldots, x_n)$ of the form
$$
P(x_1, \ldots, x_n) = \sum_{d_1, \ldots , d_n \geq 0} c_{d_1, \ldots , d_n}x_1^{d_1} \cdots x_n^{d_n},
$$
where the coefficients $c_{d_1, \ldots , d_n}$ are from $\F$, and only a finite 
number of them are different from zero.
The {\em degree} $\deg(P)$ of $P$ is the largest value of $d_1 +  \cdots + d_n $ for which the coefficient $c_{d_1, \ldots , d_n}$
is non-zero, where by convention the degree of the zero polynomial is $- \infty$. The ring of polynomials over $\F$ in $n$ 
variables is denoted by $\F[x_1, \ldots, x_n]$. 

Every polynomial $P \in \F[x_1, \ldots, x_n]$ defines naturally a function from $\F^n$ to $\F$.
While over infinite fields 
this application is one-to-one, this is not true over finite fields where different polynomials
might define the same function. For example, over the field $\F_q$ of size $q$, the polynomial $x^q -x$ is not 
the zero polynomial (it has degree $q$), but it computes the zero function. 

Numerous results are known about the properties of zero sets of polynomials. The Combinatorial 
Nullstellensatz of Alon~\cite{Alon99} is a higher dimensional extension of the well known fact that a
non-zero polynomial of degree $d$ has at most $d$ zeros. It was widely used to prove a variety of results,
among others, 
in combinatorics, graph theory and additive number theory. 

\begin{theorem}[Combinatorial Nullstellensatz]
\label{thm:cnsz}
Let $\F$ be a field, let $d_1, \ldots, d_n$ be non-negative integers, and let 
$P \in \F[x_1, \ldots, x_n]$ be a polynomial.  Suppose that $\deg(P) =\sum_{i=1}^n d_i$, and
that the coefficient of $x_1^{d_1} \cdots x_n^{d_n}$ is non-zero. Then for all subsets $S_1, \ldots, S_n$ of $\F$ with
$|S_i| > d_i$, for $i = 1, \ldots, n$, there exists $(s_1, \ldots s_n) \in S_1 \times \cdots \times S_n$ such that
$P(s_1, \ldots, s_n) \neq 0$.
\end{theorem}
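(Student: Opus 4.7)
The plan is to proceed by contradiction: suppose that $P$ vanishes on every point of $S_1 \times \cdots \times S_n$; by shrinking each $S_i$ if necessary, assume $|S_i| = d_i + 1$. For each $i \in [n]$, define the univariate polynomial $g_i(x_i) = \prod_{s \in S_i}(x_i - s) \in \F[x_i]$, which has degree $d_i + 1$ and vanishes identically on $S_i$. Writing $g_i(x_i) = x_i^{d_i+1} - h_i(x_i)$ with $\deg(h_i) \leq d_i$ yields the substitution rule $x_i^{d_i+1} \equiv h_i(x_i) \pmod{g_i}$. I would then reduce $P$ modulo the ideal $(g_1, \ldots, g_n)$: iteratively, whenever a monomial contains a factor $x_i^{a_i}$ with $a_i \geq d_i + 1$, replace $x_i^{a_i}$ by $x_i^{a_i - d_i - 1} h_i(x_i)$, strictly lowering its $x_i$-degree. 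The process terminates, giving a polynomial $Q$ with $\deg_{x_i}(Q) \leq d_i$ for every $i$ and $P - Q \in (g_1, \ldots, g_n)$; since each $g_i$ vanishes on $S_i$, every element of this ideal vanishes on $S_1 \times \cdots \times S_n$, so $Q$ agrees with $P$ on the product set and hence also vanishes there.

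The main subtlety, and the step I expect to require the most care, is to show that the coefficient of $x_1^{d_1} \cdots x_n^{d_n}$ in $Q$ equals that in $P$, and is therefore non-zero. The key observation is that each individual substitution turns a single monomial of total degree $\delta$ into a sum of monomials of total degree strictly less than $\delta$: the $x_i$-degree strictly drops from $a_i$ to at most $a_i - 1$, while degrees in the other variables are unchanged. Since every monomial of $P$ has total degree at most $\sum_{i=1}^n d_i$, no sequence of reductions starting from a monomial of $P$ can produce the monomial $x_1^{d_1} \cdots x_n^{d_n}$ of total degree exactly $\sum d_i$, except by leaving the starting monomial untouched, which happens precisely when it already equals $x_1^{d_1} \cdots x_n^{d_n}$. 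Hence this monomial's coefficient is preserved throughout the reduction and remains non-zero in $Q$.

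Finally, I would invoke the following elementary fact, proved by induction on $n$ from the single-variable statement that a non-zero polynomial of degree $d$ has at most $d$ roots: any polynomial $Q \in \F[x_1, \ldots, x_n]$ with $\deg_{x_i}(Q) \leq d_i$ for each $i$ that vanishes on $S_1 \times \cdots \times S_n$ with $|S_i| \geq d_i + 1$ must be identically zero. The induction step fixes the values of $x_2, \ldots, x_n$ in $S_2 \times \cdots \times S_n$ and uses the univariate fact on $x_1$; the coefficient polynomials in $x_2, \ldots, x_n$ then satisfy the same hypothesis with one fewer variable, and the induction kills them. Applied to our reduced $Q$, this lemma forces $Q \equiv 0$, contradicting the preservation statement just established and completing the proof.
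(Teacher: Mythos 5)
Your proof is correct; this is essentially Alon's original argument for the Combinatorial Nullstellensatz (reduce $P$ modulo the ideal generated by $g_i(x_i)=\prod_{s\in S_i}(x_i-s)$, observe that the top-degree coefficient of $x_1^{d_1}\cdots x_n^{d_n}$ survives because every substitution strictly lowers total degree, and then apply the vanishing lemma for polynomials of individual degree at most $d_i$ on a grid of size $\prod(d_i+1)$). Note that the paper itself does not prove this statement --- it is quoted as a known result with a citation to Alon --- so there is nothing to compare against; your write-up supplies the standard proof and all the steps, including the degree-preservation argument and the inductive grid lemma, are sound.
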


The classical result of Chevalley~\cite{Ch36} and Warning~\cite{W36} 
asserts 
that if the sum of degrees of some
polynomials is less than the number of variables, than the number of their common zeros is divisible by the characteristic of 
the field.

\begin{theorem}[Chevalley-Warning Theorem]
\label{thm:cw}
Let $\F$ be a field of characteristic $p$, and let $P_1, \ldots , P_k \in \F[x_1, \ldots, x_n]$ be non-zero
polynomials. If $\sum_{i=1}^k \deg(P_i) < n$, then the number of common zeros of $P_1, \ldots , P_k$
is divisible by $p$. In particular, if the polynomials have a common zero, they also have another one.
\end{theorem}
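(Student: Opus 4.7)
The plan is to carry out the classical Chevalley--Warning argument. Since the statement is vacuous unless $\F$ is finite, I would take $\F = \F_q$ with $q = p^m$, construct a polynomial $Q \in \F_q[x_1,\ldots,x_n]$ whose value at $x$ is the indicator that $x$ is a common zero of $P_1,\ldots,P_k$, and then evaluate $\sum_{x \in \F_q^n} Q(x)$ in two different ways: combinatorially, it counts common zeros modulo $p$; algebraically, it vanishes because of a degree bound. Specifically, I would set
\[
Q(x) \;:=\; \prod_{i=1}^{k} \bigl(1 - P_i(x)^{q-1}\bigr).
\]
Fermat's little theorem in $\F_q$ gives $a^{q-1} = 1$ for $a \neq 0$ and $0^{q-1} = 0$, so each factor equals $1$ exactly when $P_i(x) = 0$ and vanishes otherwise. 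Hence $Q$ is the indicator function of the common zero set, and letting $N$ denote the number of common zeros,
\[
\sum_{x \in \F_q^n} Q(x) \;\equiv\; N \pmod{p},
\]
with the left-hand side computed in $\F_q$. So it is enough to show this sum is $0$ in $\F_q$.

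The degree of $Q$ is at most $(q-1)\sum_i \deg(P_i) < (q-1)n$ by hypothesis, so it suffices to prove the following averaging lemma: for every monomial $x_1^{d_1}\cdots x_n^{d_n}$ of total degree strictly less than $(q-1)n$,
\[
\sum_{x \in \F_q^n} x_1^{d_1}\cdots x_n^{d_n} \;=\; \prod_{i=1}^n \sum_{a \in \F_q} a^{d_i} \;=\; 0.
\]
Under the degree bound some exponent $d_i$ must satisfy $d_i < q-1$, and for such an exponent the inner sum vanishes: if $d_i = 0$ it equals $q = 0$ in $\F_q$, while for $1 \le d_i < q-1$, picking a generator $g$ of $\F_q^\times$ turns the sum over $\F_q^\times$ into a geometric series $\sum_{j=0}^{q-2} g^{j d_i} = (g^{(q-1)d_i} - 1)/(g^{d_i} - 1) = 0$, whose numerator vanishes while the denominator does not (and adding the $a=0$ term contributes nothing further). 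Expanding $Q$ as a sum of monomials and applying this lemma term by term forces $\sum_x Q(x) = 0$ in $\F_q$, and hence $p \mid N$.

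For the ``in particular'' clause, if the $P_i$ have any common zero then $N \ge 1$, and since $p \mid N$ with $p \ge 2$ we must have $N \ge 2$, which yields a second common zero. The proof is essentially a two-line identity built around Fermat's little theorem, and I do not foresee any genuine obstacle: the only technical ingredient is the averaging lemma on monomial power sums, which is routine bookkeeping on the cyclic group $\F_q^\times$.
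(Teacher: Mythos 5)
Your proof is the standard Chevalley--Warning argument (indicator polynomial $\prod_i(1-P_i^{q-1})$ plus the vanishing of monomial power sums of degree below $(q-1)n$), and it is correct; the paper itself states Theorem~\ref{thm:cw} as a classical result and gives no proof, only citations to Chevalley and Warning, so there is nothing to compare against. The one point worth flagging is that the theorem as written in the paper only says ``field of characteristic $p$,'' whereas the statement really requires $\F$ finite (over an infinite field the common zero set can be infinite); your opening restriction to $\F=\F_q$ is the right reading and matches the intended classical setting.
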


Both of these results clearly suggest a computational problem in TFNP: Given a (set of) polynomial(s) satisfying
the respective condition of these theorems, find an element in $\F^n$ satisfying the respective conclusion.
We study here these problems 
over the two-element 
field $\F_2$ 
where both theorems have a particularly simple form, in fact they become almost the same statement.
To see that, let us recall that a {\em multilinear polynomial}  
is a polynomial of the form
$M(x_1, \ldots, x_n) = \sum_{T \subseteq \{1, \ldots , n\}} c_T x_T$, where $x_T$ stands for 
the {monomial} $\prod_{i \in T} x_i$,
and the coefficients $c_T$ are elements of $\F_2$. 
We say that a monomial $x_T$ is {\em in} $M$ if $c_T = 1$.
The degree of a multilinear polynomial $M$ is the cardinality of the largest set $T$ such that $x_T$ is in $M$.
It is well known that for every polynomial $P$ over $\F_2$, there exists
a unique multilinear polynomial $M_P(x_1, \ldots , x_n)$ such that $P$ and $M_P$ compute the same function.
We define the {\em multilinear degree} of a polynomial $P$ over $\F_2$ by
$\mdeg(P) = \deg(M_P)$. 
We call a monomial {\em maximal} if its multilinear degree is $n$. 
Clearly $\mdeg(P) \leq \deg(P)$, and $\mdeg(P) = n$
if and only if the number of maximal monomials of $P$ is odd.
Using the notion of multilinear degree, we can now state the rather simple equivalent formulations
of the above theorems over $\F_2$.

\begin{theorem}[Combinatorial Nullstellensatz over $\F_2$]
Let $P \in \F_2[x_1, \ldots, x_n]$ be a polynomial such that $\mdeg(P)=n$.
Then there exists $a \in \F_2^n$ such that
$P(a) = 1$. 
\end{theorem}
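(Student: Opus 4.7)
The plan is to establish the stronger parity statement already advertised in the introduction, namely that the number of points $a\in\F_2^n$ with $P(a)=1$ is odd, and then simply observe that oddness implies existence. This reduces the theorem to a direct sum-over-the-cube calculation, using nothing beyond the definition of multilinear degree.

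First I would replace $P$ by its multilinear reduction $M_P$, exploiting the fact already recorded in the text: $P$ and $M_P$ compute the same function on $\F_2^n$, and $\mdeg(P)=\deg(M_P)=n$ by hypothesis. The latter means the coefficient $c_{[n]}$ of the monomial $x_1x_2\cdots x_n$ in $M_P=\sum_{T\subseteq[n]}c_Tx_T$ equals $1$.

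Next I would compute
\[
\sum_{a\in\F_2^n}P(a)\;=\;\sum_{a\in\F_2^n}M_P(a)\;=\;\sum_{T\subseteq[n]}c_T\sum_{a\in\F_2^n}\prod_{i\in T}a_i\pmod 2.
\]
For any proper subset $T\subsetneq[n]$, the inner sum factors as $2^{n-|T|}\cdot\prod_{i\in T}(0+1)$, which vanishes modulo $2$ since $n-|T|\geq 1$. Hence only the term $T=[n]$ survives, and it contributes $c_{[n]}=1$. Thus $\sum_{a\in\F_2^n}P(a)\equiv 1\pmod 2$, so the cardinality of $\{a\in\F_2^n:P(a)=1\}$ is odd, and in particular positive.

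There is essentially no obstacle: the only conceptual ingredient is that reduction to the multilinear representative is faithful on $\F_2^n$, which is immediate from $a_i^2=a_i$, and the only computational ingredient is the well-known fact that $\sum_{a\in\F_2^n}\prod_{i\in T}a_i$ equals $2^{n-|T|}$. This same calculation, applied in the contrapositive direction (if $\mdeg(P)<n$, the sum is $0$ mod $2$, so the number of ones, and thus also the number of zeros, is even), yields the Chevalley--Warning half of the equivalence, confirming the symmetric statement highlighted earlier in the introduction.
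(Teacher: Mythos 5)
Your proof is correct, and it is worth noting that the paper itself offers no proof of this statement: it is presented there as a ``rather simple equivalent formulation'' of the classical Combinatorial Nullstellensatz and Chevalley--Warning theorems (Theorems~\ref{thm:cnsz} and~\ref{thm:cw}), which one would instantiate by applying the general CNSS to the multilinear reduction $M_P$ with $d_1=\cdots=d_n=1$ and $S_i=\F_2$. Your route is more self-contained: the computation
$\sum_{a\in\F_2^n}P(a)=\sum_{T\subseteq[n]}c_T\,2^{n-|T|}\equiv c_{[n]}\pmod 2$
directly establishes the sharper parity statement advertised in the introduction (the number of non-zeros, equivalently of zeros, of $P$ is odd if and only if $\mdeg(P)=n$), from which both the $\F_2$-CNSS and the $\F_2$-Chevalley--Warning statements follow at once, with no appeal to the general theorems over arbitrary fields. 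The only ingredients you use --- faithfulness of multilinear reduction on $\F_2^n$ via $a_i^2=a_i$, and the count $\#\{a:\prod_{i\in T}a_i=1\}=2^{n-|T|}$ --- are both correct, so the argument stands as a complete elementary proof.
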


\begin{theorem}[Chevalley-Warning Theorem over $\F_2$]
Let $P \in \F_2[x_1, \ldots, x_n]$ be a polynomial such that $\mdeg(P)<n$, and let
$a \in \F_2^n$ such that
$P(a) = 0$. Then there exists $b \neq a$ such that $P(b) = 0$.
\end{theorem}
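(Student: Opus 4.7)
The plan is to reduce this statement to the equivalence highlighted in the introduction: over $\F_2$, a polynomial $P$ has an odd number of zeros in $\F_2^n$ if and only if $\mdeg(P)=n$. Once that equivalence is established, the conclusion is immediate. Writing $Z(P) = \{x \in \F_2^n : P(x)=0\}$, the hypothesis $\mdeg(P)<n$ forces $|Z(P)|$ to be even, and since $a \in Z(P)$ we have $|Z(P)| \geq 2$, so some $b \neq a$ with $P(b)=0$ must exist.

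To prove the equivalence, I would first replace $P$ by its unique multilinear representative $M_P$, which computes the same function on $\F_2^n$ and has the same multilinear degree. For $n \geq 1$ we have $|Z(P)| = 2^n - |\{x : P(x)=1\}|$, and since $2^n$ is even, the parity of $|Z(P)|$ equals the parity of the integer $\sum_{x \in \F_2^n} M_P(x)$, i.e.\ its value reduced modulo $2$. So it suffices to show that $\sum_{x \in \F_2^n} M_P(x) = 1$ in $\F_2$ if and only if $\mdeg(P)=n$.

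The last step is a routine expansion. Writing $M_P = \sum_{T \subseteq [n]} c_T x_T$ and using linearity, we need to evaluate $\sum_{x \in \F_2^n} \prod_{i \in T} x_i$ for each subset $T$. Factoring over the independent coordinates,
\[
\sum_{x \in \F_2^n} \prod_{i \in T} x_i \;=\; \prod_{i \in T} \Bigl(\sum_{x_i \in \F_2} x_i\Bigr) \cdot \prod_{j \notin T} \Bigl(\sum_{x_j \in \F_2} 1\Bigr) \;=\; 1 \cdot 2^{\,n-|T|},
\]
which is even whenever $T \subsetneq [n]$ and equals $1$ exactly when $T=[n]$. Thus in $\F_2$ we obtain $\sum_x M_P(x) = c_{[n]}$, which is $1$ if and only if the maximal monomial $x_1\cdots x_n$ appears in $M_P$, equivalently $\mdeg(P) = n$.

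There is no genuine obstacle here: the whole argument is a short algebraic observation, sometimes phrased as a tiny instance of the $\F_2$-Fourier inversion formula. The only mild subtlety is taking $n \geq 1$, which is harmless since the claim is vacuous for $n=0$ (no polynomial has multilinear degree strictly less than $0$ except the zero polynomial, whose only input $a$ is the empty tuple, and the statement becomes vacuous when no second point can exist).
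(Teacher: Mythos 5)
Your proof is correct, but it takes a different and more self-contained route than the paper. The paper offers no standalone proof of this statement: it presents it as a ``rather simple equivalent formulation'' of the classical Chevalley--Warning Theorem (Theorem~\ref{thm:cw}), the implicit derivation being that the multilinear representative $M_P$ computes the same function as $P$ and satisfies $\deg(M_P)=\mdeg(P)<n$, so the classical theorem with $p=2$ and $k=1$ already gives that the number of zeros is even. You instead prove from scratch the full parity characterization --- the number of points with $P(x)=1$ is congruent mod $2$ to the coefficient $c_{[n]}$ of $x_1\cdots x_n$ in $M_P$, via the identity $\sum_{x\in\F_2^n}\prod_{i\in T}x_i = 2^{n-|T|}$ --- which is essentially the standard character-sum proof of Chevalley--Warning specialized to $\F_2$. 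Your route buys self-containment and simultaneously yields the converse direction (an odd number of zeros forces $\mdeg(P)=n$), which is precisely the equivalence the paper's introduction asserts and which the companion CNSS-over-$\F_2$ statement also needs; the paper's route is shorter but leans on an external theorem. One small quibble: your closing remark that the $n=0$ case is ``vacuous'' is not quite accurate --- for $n=0$ the zero polynomial and the empty tuple satisfy the hypotheses while the conclusion fails, so the statement genuinely presupposes $n\ge 1$; the paper silently makes the same assumption, and nothing in your main argument is affected.
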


\subsection{Arithmetic circuits}\label{subsec:ckt}

An $n$-variable, $m$-output {\em arithmetic circuit} $C$ over a field $\F$ is
a vertex-labeled, acyclic directed graph whose vertices are called {\em gates}. It has $n$ {\em variable gates} of 
in-degree $0$, labeled by the variables 
$x_1, \ldots, x_n$. 
There is at most one {\em constant gate} of in-degree 0, labeled by the constant, for each non-zero field element.
The variable and constant gates are called {\em input gates}.
The other gates are of in-degree $2$, and are called {\em computational gates}. They are labeled
by $+$ or $\times$, 
the former are the {\em sum gates}, and the latter the
{\em product gates}. The number of computational gates of out-degree 0 is $m$, and they are called the {\em output gates}.

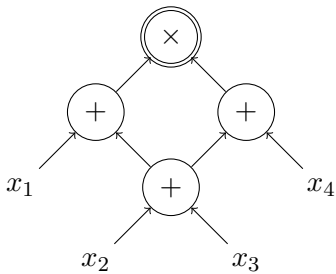
\begin{figure}[H]
	\centering
\begin{tikzpicture}
	[output/.style = {double, double distance = 1pt, circle, draw},
	gate/.style = {circle, draw},
	input/.style = {}]
	\node [output] (s1) at (0, 0) {$\times$};
	\node [gate] (s2) at (-1, -1) {$+$};
	\node [gate] (s3) at (1, -1) {$+$};
	\node [gate] (s4) at (0, -2) {$+$};
	\node [input] (x1) at (-2, -2) {$x_1$};
	\node [input] (x2) at (-1, -3) {$x_2$};
	\node [input] (x3) at (1, -3) {$x_3$};
	\node [input] (x4) at (2, -2) {$x_4$};

	\draw [->] (s2) -- (s1);
	\draw [->] (s3) -- (s1);
	\draw [->] (s4) -- (s2);
	\draw [->] (s4) -- (s3);
	\draw [->] (x1) -- (s2);
	\draw [->] (x2) -- (s4);
	\draw [->] (x3) -- (s4);
	\draw [->] (x4) -- (s3);
\end{tikzpicture}
	\caption{A $4$-variable, single-output arithmetic circuit.}
	\label{fig:circuit} 
\end{figure}
For a computational gate $g$, we distinguish its two children, 
by specifying the {\em left} and the {\em right} child.
The left child is denoted by $g_{\ell}$ and the right child by $g_r$.
We denote the set of sum gates by $G^+$, and the set of product gates by 
$G^{\times}$.
The {\em size} of $C$ is the number of its gates, and the 
{\em depth} of $C$ is the length of the longest path from an input gate to an output gate. 

The definition of an arithmetic circuit can be extended naturally to include computational gates 
of in-degree different from 2. Unary computational gates by definition 
act as the identity operator. The children of computational gates of in-degree $k >2$
are distinguished by some some distinct labeling over some set of size $k$.
It is easy to see that such an extended circuit can be simulated by a circuit with binary
computational gates,
which computes the same polynomial, and has only a 
polynomial blow-up 
in size. 
Our default circuits will be with binary computational gates, and we will mention explicitly when this is not the case.

A {\em subcircuit} of a circuit $C$ is a subgraph of $C$ which is also a circuit.
The {\em subcircuit rooted at} gate $g$
is the subgraph induced by all vertices contained on some path from the input gates to $g$, it
will be denoted by $C_g$. 
The {\em left subcircuit of} $C$, denoted by $C_{\ell}$, is the subcircuit rooted at 
the left child of the root of $C$, and the {\em right subcircuit} $C_r$ is defined similarly. 
The {\em composition} of arithmetic circuits is defined in a natural way. If $C_1$ is an $n$-variable, $m$-output circuit 
and $C_2$ is a $k$-variable, $n$-output circuit then $C_1 \circ C_2$ is the $k$-variable, $m$-output circuit
composed of $C_1$ and $C_2$ where the output 
gates of $C_1$ are identified with the variable~gates~of~$C_2$, and the identical constant gates
of the two circuits are~also~identified. Let $C_1$ and $C_2$ be $n$-variable, single-output arithmetic circuit.
The {\em disjoint sum} $C_1 \oplus C_2$ of $C_1$ and $C_2$ is the $n$-variable, single-output arithmetic circuit whose output
gate is a sum gate, its left and right subcircuits are disjoint copies of $C_1$ and $C_2$ except for the input gates
that $C_1$ and $C_2$ share. The disjoint sum naturally generalizes to more than two circuits.

Every gate $g$ in an arithmetic circuit computes an $n$-variable polynomial $P_g(x)$ in the natural way, 
which can be defined by recursion on the depth of the gate. An input gate $g$ labeled by 
$\alpha \in \{x_1, \ldots, x_n\} ~\cup ~\F$ computes $P_g =\alpha$. 
If $g \in G^+$ then $P_g = P_{g_{\ell}} + P_{g_r}$, if $g \in G^{\times}$ then $P_g = P_{g_{\ell}} P_{g_r}$.
The polynomial computed by a single-output 
arithmetic circuit $C$ is the polynomial computed by its output gate, which we will denote by $C(x).$
We define similarly by recursion the {\em circuit degree} $\cdeg(C)$ of $C$. 
If an input gate $g$ is labeled by $\alpha \in  \F$ then $\cdeg(C_g)=0$, and if it is labeled by
$\alpha \in \{x_1, \ldots, x_n\}$ then $\cdeg(C_g)=1$.
For computational gates, if $g \in G^+$ then 
$\cdeg(C_g ) = \max \{ \cdeg(C_{g_{\ell}}), \cdeg(C_{g_r})\}$, and if $g \in G^{\times}$ then 
$\cdeg(C_g ) = \cdeg(C_{g_{\ell}}) + \cdeg(C_{g_r})$.
The circuit degree can be computed in polynomial time, and we clearly have $\deg(C(x)) \leq \cdeg(C)$.


Over the base field $\F_2$, we call an element
$a \in \F_2^n$, such that $C(a) = 1$, a {\em satisfying assignment} for $C$, and an element
$a$, such that $C(a) = 0$,  a {\em zero} of $C$. For every $a \in \F_2^n$, we define the {\em Lagrange-circuit}
$L_a$ as $ C_1 \times \cdots \times C_n$, where $C_i = x_i$ if $a_i =1$, and $C_i = x_i +1$ if $a_i =0$.
Clearly $\mdeg(L_a(x)) = n$, and the only satisfying assignment for $L_a$ is $a$.

\begin{figure}[H]
	\centering
\begin{tikzpicture}
	[output/.style = {double, double distance = 1pt, circle, draw},
	gate/.style = {circle, draw},
	input/.style = {}]
	\node [output] (s1) at (0, 0) {$\times$};
	\node [gate] (s2) at (0, -1.5) {$+$};
	\node [gate] (s3) at (1, -1.5) {$+$};
	\node [input] (x1) at (-1, -1.5) {$x_1$};
	\node [input] (x2) at (-0.5, -3) {$x_2$};
	\node [input] (c1) at (0.5, -3) {$1$};
	\node [input] (x3) at (1.5, -3) {$x_3$};

	\draw [->] (s2) -- (s1);
	\draw [->] (s3) -- (s1);
	\draw [->] (x1) -- (s1);
	\draw [->] (x2) -- (s2);
	\draw [->] (x3) -- (s3);
	\draw [->] (c1) -- (s2);
	\draw [->] (c1) -- (s3);
\end{tikzpicture}
	\caption{Lagrange-circuit $L_{100}$.}
	\label{fig:lagrange} 
\end{figure}
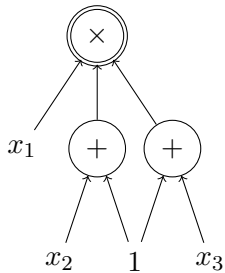

\section{Parse subcircuits}
\label{sec:parse_sub}

\suppress{
\alex{It is a good idea to add dramatic effect, but imho, it should be done a bit differently.
Start with formulae.  For them, it is easy to see how monomials are formed: add left or right semaphore to each accessible addition gate.
If we go from formulae to general circuits, we lose this simplicity since for different paths we might need to go in different directions.  But look, over characteristic 2, we get the same simplicity as for formulae also for general circuits.}
\miklos{I added half a sentence in that direction ``This is in fact the case ...", 
but otherwise I prefer to leave it this way. Actually the computation
of monomials in circuits is an interesting subject, and I feel that by starting with formulae this is not even hinted to. But feel free to change it to your taste}
\alex{``This is in fact the case'' is not really good here, since for formulae we need that there is no assignment for non-accessible sum-gates.}
}

We would like to understand how monomials are computed by a single-output
arithmetic circuit $C$.
If $g$ is a sum gate, then the set of monomials computed by $C_g$  is a subset of the union of the set of monomials
computed by $C_{g_{\ell}}$ and by $C_{g_r}$.
If $g$ is a multiplication gate, then every monomial computed by $C_g$ is the product of a monomial 
computed by $C_{g_{\ell}}$ and 
a monomial computed by $C_{g_r}$.  A marking of the gates in $G^+$ from the set $\{\ell, r\}$
therefore computes naturally a monomial of $C(x)$. At first sight it seems that 
by considering markings restricted to the sum gates effectively participating in the computing of the monomial,
we could compute all of them.
This is in fact the case 
when the fanout of every sum gate is one, but this is not true in general circuits since the sum gates can be used
several times in the computation of a monomial with possibly inconsistent markings. However, as we show it below, this is
essentially true over fields of characteristic 2, where it is sufficient to consider only consistent markings.
By doing that, we have to be careful about two things: when computing a 
monomial by some marking, we shouldn't mark those sum gates which don't participate in its computation. Indeed, by 
considering the two possible markings also for irrelevant gates, we would  assure that the monomial 
is necessarily computed an even number of times, making the whole process false. 
On the other hand, we should mark all the sum gates necessary for the 
computation of the monomial. 
We make all this precise by the notion of closed marking and parse subcircuit.


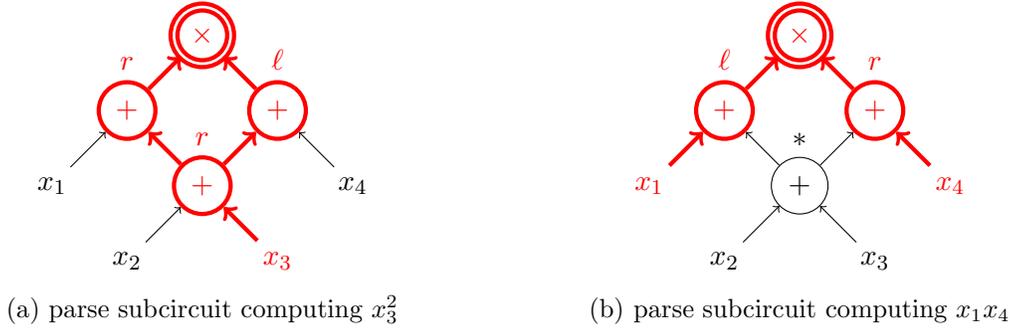
\begin{figure}[H]
	\centering
	\begin{subfigure}[b]{0.45\textwidth}
	\centering
\begin{tikzpicture}
	[output/.style = {double, double distance = 1pt, circle, draw},
	gate/.style = {circle, draw},
	input/.style = {},
	parse/.style = {color = red, ultra thick}]
	\node [output, parse] (s1) at (0, 0) {$\times$};
	\node [gate, parse] (s2) at (-1, -1) [label = above:{\color{red}$r$}] {$+$};
	\node [gate, parse] (s3) at (1, -1) [label = above:{\color{red}$\ell$}] {$+$};
	\node [gate, parse] (s4) at (0, -2) [label = above:{\color{red}$r$}] {$+$};
	\node [input] (x1) at (-2, -2) {$x_1$};
	\node [input] (x2) at (-1, -3) {$x_2$};
	\node [input, parse] (x3) at (1, -3) {$x_3$};
	\node [input] (x4) at (2, -2) {$x_4$};

	\draw [->, parse] (s2) -- (s1);
	\draw [->, parse] (s3) -- (s1);
	\draw [->, parse] (s4) -- (s2);
	\draw [->, parse] (s4) -- (s3);
	\draw [->] (x1) -- (s2);
	\draw [->] (x2) -- (s4);
	\draw [->, parse] (x3) -- (s4);
	\draw [->] (x4) -- (s3);
\end{tikzpicture}
		\subcaption{parse subcircuit computing $x_3^2$}
	\end{subfigure}
	\quad
	\begin{subfigure}[b]{0.45\textwidth}
	\centering
\begin{tikzpicture}
	[output/.style = {double, double distance = 1pt, circle, draw},
	gate/.style = {circle, draw},
	input/.style = {},
	parse/.style = {color = red, ultra thick}]
	\node [output, parse] (s1) at (0, 0) {$\times$};
	\node [gate, parse] (s2) at (-1, -1) [label = above:{\color{red}$\ell$}] {$+$};
	\node [gate, parse] (s3) at (1, -1) [label = above:{\color{red}$r$}] {$+$};
	\node [gate] (s4) at (0, -2) [label = above:$*$] {$+$};
	\node [input, parse] (x1) at (-2, -2) {$x_1$};
	\node [input] (x2) at (-1, -3) {$x_2$};
	\node [input] (x3) at (1, -3) {$x_3$};
	\node [input, parse] (x4) at (2, -2) {$x_4$};

	\draw [->, parse] (s2) -- (s1);
	\draw [->, parse] (s3) -- (s1);
	\draw [->] (s4) -- (s2);
	\draw [->] (s4) -- (s3);
	\draw [->, parse] (x1) -- (s2);
	\draw [->] (x2) -- (s4);
	\draw [->] (x3) -- (s4);
	\draw [->, parse] (x4) -- (s3);
\end{tikzpicture}
		\subcaption{parse subcircuit computing $x_1 x_4$}
	\end{subfigure}
	\caption{Two parse subcircuits for 
	Figure~\ref{fig:circuit}, note that 
	the second one doesn't access all sum gates.}
\label{fig:psc}
\end{figure}

Let $C$ be a single-output arithmetic circuit. A {\em marking} of $C$ is a partial function $S\colon G^+ \rightarrow \{\ell, r\}$,
from the sum gates of $C$ to the marks $\{\ell, r\}$. 
We can equivalently specify a marking by a total function $S^*\colon G^+ \rightarrow \{\ell, r, *\}$ where $S^*(g) = *$ if and only
if $S(g)$ is undefined. 
We denote by $\Dom(S)$ the domain of $S$.
For the output gate of $C$, let $S_{\ell}$ be the restriction of $S$
to the sum gates in $C_{\ell}$ and let $S_{r}$ be the restriction of $S$
to the sum gates in  $C_r$. 
We define $G_S = (V_S, E_S)$, the {\em accessibility graph } of $S$ 
by induction on the depth of $C$.
If $C$ is a single vertex then $V_S$ consists of this vertex, and $E_S = \emptyset$.
Otherwise, 
if the output gate is a product gate, then 
$V_S$ consists of the output gate of $C$ added to $V_{S_{\ell}} \cup V_{S_r}$, and
$E_S$ consists of the two edges from the two children of the output gate to the output gate, added to $E_{S_{\ell}} \cup E_{S_r}$.
If the output gate of $C$ is a sum gate with mark $\ell$ then 
$V_S$ consist of the output gate of $C$ added to $V_{S_{\ell}}$, and
$E_S$ consists of the  edge from the left child of the output gate to the output gate, added to $E_{S_{\ell}}$.
The definition in the case when the mark of the output gate is $r$ is analogous. If the output gate of $C$ doesn't have a mark then 
the accessibility graph is just this single node.

We say that a marking $S$ is {\em closed} if $\Dom(S) = V_S \cap G^+$, that is if the accessible sum gates of 
$C$ are exactly those where $S$ is defined.
If $S$ is 
closed then the accessibility graph $G_S$, with the vertex labels inherited from $C$, is in fact 
a subcircuit of $C$. The inclusion $\Dom(S) \subseteq V_S \cap G^+$ ensures that 
the only node of out-degree 0 in $G_S$ is the output gate of $C$,
and the inclusion $V_S \cap G^+ \subseteq \Dom(S)$ ensures that the
leaves of $G_S$ are leaves in $C$. We call this subcircuit the {\em parse subcircuit} induced by $S$,
and denote it by $C_S$.
The set of parse subcircuits of $C$ will be denoted by $\cS(C)$.
Observe that a parse subcircuit has binary product gates but unary sum gates which act as the identity operator.
The polynomial $C_S(x)$ {\em computed by} the parse subcircuit $C_S$ 
is therefore a monomial, which we denote by $m_S(x)$.
We say that a parse subcircuit $C_S$ 
is {\em maximal}
if the multilinear degree of $m_S(x)$ is $n$, that is $m_S(x)= x_1 \cdots x_n$.
We say that two parse subcircuits $C_S$ and $C_{S'}$ 
are {\em consistent} if for every $g \in \Dom(S) \cap \Dom(S')$, we have $S(g) = S'(g)$.

Clearly, the mapping from closed markings to induced parse subcircuits is a 
bijection. 
Therefore, to ease notation, we will often call the
closed marking $S$ itself the parse subcircuit, and we will speak about the gates, subcircuits 
and other circuit related notions of $S$, instead of $C_S$. The notation used for the monomial
computed by a 
parse 
subcircuit is already consistent with this convention.


\begin{proposition}
\label{prop:char2}
Let $C$ be a single-output arithmetic circuit over a field $\F$ of characteristic $2$. Then
$$C(x) =\sum_{S \in \cS(C)} m_S(x).$$ 
\end{proposition}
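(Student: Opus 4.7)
The plan is induction on the depth of $C$. The base case is immediate: if $C$ consists of a single input gate, then $\cS(C) = \{\emptyset\}$ and $m_\emptyset(x)$ equals the label of the input, which is $C(x)$.

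For the inductive step, let $g$ be the root of $C$. If $g$ is a sum gate, I would partition $\cS(C)$ by the value $S(g) \in \{\ell, r\}$ and show that each half is in monomial-preserving bijection with $\cS(C_\ell)$ or $\cS(C_r)$ respectively: removing the root from a parse subcircuit gives a closed marking of the appropriate subcircuit, and adding it back with the prescribed mark is the inverse. Applying the inductive hypothesis, $\sum_{S \in \cS(C)} m_S = \sum_{S_\ell \in \cS(C_\ell)} m_{S_\ell} + \sum_{S_r \in \cS(C_r)} m_{S_r} = C_\ell(x) + C_r(x) = C(x)$.

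If $g$ is a product gate, the inductive hypothesis gives $C(x) = C_\ell(x)\, C_r(x) = \sum_{(S_\ell, S_r)} m_{S_\ell}(x)\, m_{S_r}(x)$. Calling a pair $(S_\ell, S_r) \in \cS(C_\ell) \times \cS(C_r)$ \emph{consistent} when $S_\ell(h) = S_r(h)$ for every shared sum gate $h \in \Dom(S_\ell) \cap \Dom(S_r)$, I would then establish a monomial-preserving bijection between $\cS(C)$ and consistent pairs: the combined partial function $S_\ell \cup S_r$ yields a closed marking on $C$, and conversely any $S \in \cS(C)$ restricts to such a consistent pair after trimming each side to the sum gates actually reached from $g_\ell$, respectively $g_r$, via the recursive definition of the accessibility graph. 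Under this correspondence $m_S = m_{S_\ell}\, m_{S_r}$, so $\sum_{S \in \cS(C)} m_S = \sum_{(S_\ell, S_r)\ \text{consistent}} m_{S_\ell}\, m_{S_r}$, and it remains to show that the sum over the inconsistent pairs vanishes in characteristic $2$.

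The crux is to pair up the inconsistent pairs by a fixed-point-free, monomial-preserving involution. The idea is to fix a topological order on the sum gates of $C$, locate the first shared sum gate $h$ at which $S_\ell$ and $S_r$ disagree, and swap between $S_\ell$ and $S_r$ the sub-parse-subcircuits rooted at $h$. Because no disagreement occurs earlier than $h$ in the chosen order, none is introduced by the swap, so $h$ remains the first conflicting gate after swapping, and applying the construction twice recovers the original pair. The most delicate point, and the main obstacle, is verifying that the swap preserves $m_{S_\ell}\, m_{S_r}$: the sub-parse-subcircuit rooted at $h$ may feed into the two monomials with different multiplicities (one per side), so matching the two sides after the swap requires the characteristic $2$ cancellation, together with the combinatorial fact that both multiplicities at $h$ are positive, in order to conclude that the two products are in fact equal.
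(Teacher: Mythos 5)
Your argument tracks the paper's proof step for step: the same induction on depth, the same bijection between $\cS(C)$ and $\cS(C_{\ell})\sqcup\cS(C_r)$ at a sum gate, the same identification of $\cS(C)$ with the consistent pairs at a product gate, and the same involution on inconsistent pairs obtained by swapping, at the first disagreeing sum gate $h$ in a fixed topological order, the two induced sub-parse-subcircuits $T_0$ and $T_1$ of $C_h$. The one place you depart from the paper is the final verification that the swap preserves the product $m_{S_\ell}m_{S_r}$, and there your proposed justification does not work. If $h$ occurs with multiplicity $j$ in the accessibility graph on the left and multiplicity $k$ on the right (i.e.\ there are $j$, resp.\ $k$, directed paths from $h$ to the respective roots), the swap multiplies the product by $(m_{T_1}/m_{T_0})^{\,j-k}$; positivity of $j$ and $k$ is irrelevant, what you need is $j=k$ (or $m_{T_0}=m_{T_1}$). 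The paper sidesteps the issue by writing $m_U=m_0m_{T_0}$ and $m_{U'}=m_0m_{T_1}$, which tacitly assumes exactly one occurrence is exchanged.

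This is not a pedantic worry: when the two children of a product gate share a sum gate with different multiplicities, the inconsistent pairs genuinely fail to cancel, and the identity itself fails. Let $g$ be the sum gate computing $x+y$ and let $C=(g\times g)\times g$ (if you insist on distinct children, replace one occurrence of $g$ by $g\times z$ for a fresh variable $z$; the failure persists). Over $\F_2$ we have $C(x,y)=(x+y)^3=x^3+x^2y+xy^2+y^3$, yet $C$ has only two parse subcircuits, computing $x^3$ and $y^3$; the two inconsistent pairs at the root contribute $x^2y$ and $xy^2$, which your involution matches with each other but which are distinct monomials. So the step you single out as ``the most delicate point'' is not merely delicate --- it is false without an additional hypothesis, and no amount of characteristic-$2$ cancellation rescues it. To complete the argument you must either assume, or verify for the circuits to which the proposition is actually applied, that every sum gate occurs with the same multiplicity on both sides of every product gate in every parse subcircuit (for instance because the left and right subcircuits of each product gate share no computational gates, so parse subcircuits are trees); with that in place $j=k$, the swap fixes $m_{S_\ell}m_{S_r}$, and both your proof and the paper's go through.
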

\begin{proof}
We prove by induction on the depth of the circuit. If $C$ consists of a single gate, the statement is obvious.

Otherwise, 
the parse subcircuits of $\cS(C_{\ell})$ (respectively $\cS(C_r)$) are exactly the parse subcircuits of $\cS(C)$
restricted to the sum gates of $C_{\ell}$ (respectively $C_r$). 
When the output gate of $C$
is a sum gate then conversely, $\cS(C)$ can be obtained from $ \cS(C_{\ell}) \cup \cS(C_r)$ by
extending the markings in the latter set with the appropriate mark for the root of $C$. 
Therefore, using the definitions of $C(x)$ and $m_S(x)$, we get
\begin{align*}
C(x) & = {C_{\ell}}(x) + {C_{r}}(x) \\
& =  \sum_{S \in \cS(C_{\ell})} m_S(x) +  \sum_{S \in \cS(C_{r})} m_S(x) \\
& =  \sum_{S \in \cS(C), ~S(\mathrm{root})= \ell} m_{{S_{\ell}}}(x)  + \sum_{S \in \cS(C), ~S(\mathrm{root})= r} m_{{S_{r}}}(x) \\
& = \sum_{S \in \cS(C)} m_S(x),
\end{align*}
where the second equality comes from the inductive hypothesis.

\begin{figure}[H]
	\centering
	\begin{subfigure}[b]{0.45\textwidth}
		\centering
\begin{tikzpicture}
	[output/.style = {double, double distance = 1pt, circle, draw},
	gate/.style = {circle, draw},
	input/.style = {}]
	\node [output] (s1) at (0, 0) {$\times$};
	\node [gate] (s2) at (-1, -1) [label = left:{\color{blue}$U$}] {$+$};
	\node [gate] (s3) at (1, -1) [label = right:{\color{green}$W$}] {$+$};
	\node [gate] (s4) at (0, -2) [label = above:$g$] {$+$};
	\node [input] (x1) at (-2, -2) {$x_1$};
	\node [input] (x2) at (-1, -3) {$x_2$};
	\node [input] (x3) at (1, -3) {$x_3$};
	\node [input] (x4) at (2, -2) {$x_4$};

	\draw [->] (s2) -- (s1);
	\draw [->] (s3) -- (s1);
	\draw [->, ultra thick, blue, dotted] (s4) -- (s2);
	\draw [->, ultra thick, green, dashed] (s4) -- (s3);
	\draw [->] (x1) -- (s2);
	\draw [->, ultra thick, blue, dotted] (x2) -- (s4);
	\draw [->, ultra thick, green, dashed] (x3) -- (s4);
	\draw [->] (x4) -- (s3);
\end{tikzpicture}
		\subcaption{inconsistent $U, W$}
	\end{subfigure}
	\quad
	\begin{subfigure}[b]{0.45\textwidth}
		\centering
\begin{tikzpicture}
	[output/.style = {double, double distance = 1pt, circle, draw},
	gate/.style = {circle, draw},
	input/.style = {}]
	\node [output] (s1) at (0, 0) {$\times$};
	\node [gate] (s2) at (-1, -1) [label = left:{\color{blue}$U'$}] {$+$};
	\node [gate] (s3) at (1, -1) [label = right:{\color{green}$W'$}] {$+$};
	\node [gate] (s4) at (0, -2) [label = above:$g$] {$+$};
	\node [input] (x1) at (-2, -2) {$x_1$};
	\node [input] (x2) at (-1, -3) {$x_2$};
	\node [input] (x3) at (1, -3) {$x_3$};
	\node [input] (x4) at (2, -2) {$x_4$};

	\draw [->] (s2) -- (s1);
	\draw [->] (s3) -- (s1);
	\draw [->, ultra thick, dotted, blue] (s4) -- (s2);
	\draw [->, ultra thick, dashed, green] (s4) -- (s3);
	\draw [->] (x1) -- (s2);
	\draw [->, ultra thick, dashed, green] (x2) -- (s4);
	\draw [->, ultra thick, dotted, blue] (x3) -- (s4);
	\draw [->] (x4) -- (s3);
\end{tikzpicture}
		\subcaption{inconsistent $U', W'$}
	\end{subfigure}
	\caption{The involutive pair $(U,W) \leftrightarrow (U',W')$ 
	in the proof of Proposition~\ref{prop:char2}
	with $m_U m_W = x_2 x_3 = m_{U'} m_{W'}$ contributes zero to $C(x)$.}
\end{figure}
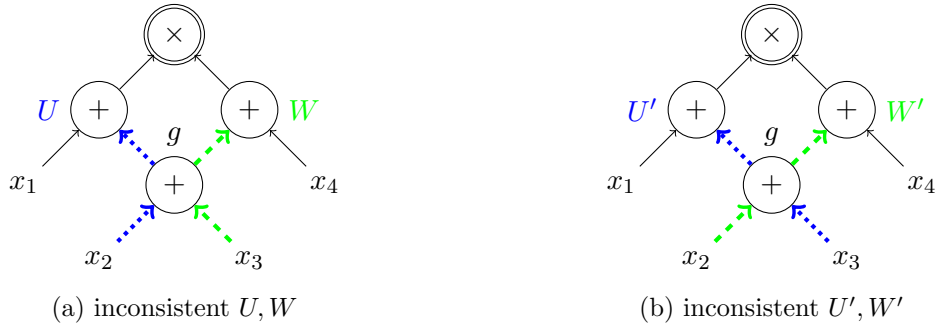

When the output gate of $C$ is a product
gate, the situation is more complicated. The parse subcircuits $S_{\ell}$ and $S_r$ are always consistent for 
$S \in \cS(C)$, but an arbitrary parse subcircuit $U \in \cS(C_{\ell})$ is not necessarily consistent with an arbitrary
parse subcircuit $W \in \cS(C_{r})$. Therefore the crux of the induction step is to show that the contribution
of $m_U(x)m_W(x)$ to $C(x)$ is zero when we sum over all inconsistent $U$ and $W$. Indeed, we claim that
$$
\sum_{ (U,W)  \in \cS(C_{\ell}) \times \cS(C_{r}), ~U,W {\rm  inconsistent}} m_U(x)m_W(x) = 0 .
$$

To prove this, we define an involution 
$(U,W) \leftrightarrow (U',W')$ 
over inconsistent pairs in $\cS(C_{\ell}) \times \cS(C_{r})$
such that $m_U(x)m_W(x) + m_{U'}(x)m_{W'}(x) = 0$.
For this let us fix some topological ordering of the gates in $C$ with respect to the edges of the circuit,
and let $g$ be the first sum gate 
in this ordering where $U$ and $W$ have different marks, 
say $U(g) = \ell$ and $W(g) = r$.
Let the restriction of $U$ to the sum gates of $C_g$ be $T_0$ and let the
restriction of $W$ to the sum gates of $C_g$ be $T_1$. Both
$T_0$ and $T_1$ are parse subcircuits in $C_g$, which are inconsistent only at $g$. Also,
for some monomials $m_0(x)$ and $m_1(x)$, we have $m_U(x) = m_0(x) m_{T_0}(x)$ 
and
$m_W(x) = m_1(x) m_{T_1}(x)$. The parse subcircuit $U'$ is obtained from $U$ by exchanging inside $C_g$ the 
parse subcircuit $T_0$ for the parse subcircuit $T_1$, that is $U' = (U \setminus T_0) \cup T_1$.
The parse subcircuit $W'$ is similarly defined from $W$ with the roles of $T_0$ and $T_1$ reversed.
It follows from the choice of $g$ that
$U'$ and $W'$ are parse subcircuits respectively in $ \cS(C_{\ell})$ and $\cS(C_{r})$ such that 
the first inconsistency between 
them in the topological order is at $g$. 
Therefore starting the same process
with $(U',W')$ we obtain $(U,W)$, and thus the mapping is indeed an involution.
Since $m_{U'}(x) = m_0(x) m_{T_1}(x)$ and
$m_{W'}(x) = m_1(x) m_{T_0}(x)$, we can conclude that $m_U(x)m_W(x) + m_{U'}(x)m_{W'}(x) = 0$.

We can now complete the induction step for product gates by observing the equalities

\begin{align*}
C(x) & =  {C_{\ell}}(x)  \times {C_{r}}(x) \\
& =   \left(\sum_{U \in \cS(C_{\ell})} m_U(x)\right) \times  \left( \sum_{W \in \cS(C_{r})} m_W(x)\right)  \\
& =  \sum_{ (U,W)  \in \cS(C_{\ell}) \times \cS(C_{r}), ~U,W {\rm  consistent}} m_U(x)m_W(x) \\
& =  \sum_{S \in \cS(C)} m_{S_{\ell} }(x) m_{S_r}(x) \\
& =   \sum_{S \in \cS(C)} m_S(x).
\end{align*}

\end{proof}

Though it is not directly related to the main result of the paper, we prove here, essentially as a corollary of the previous
proposition, that deciding if the polynomial computed by a circuit over the two elements field 
has maximal multilinear degree is $\oplus \Pe$-complete. 
Note that by the Chevalley-Warning theorem, the multilinear degree of a circuit is 
maximal if and only if it has odd number of satisfying assignments, and via this 
correspondence 
Proposition~\ref{prop:parPcomplete} can also be proved by using the number of 1's 
to build a balanced relation. The point of our proof of 
Proposition~\ref{prop:parPcomplete} 
is to show this without referring to the Chevalley-Warning theorem, and therefore 
illustrate the use of maximal parse subcircuits.  

\begin{proposition}
\label{prop:parPcomplete}
Let $C$ be an $n$-variable, single-output arithmetic circuit over the field $\F_2$. The problem of
deciding if $\mdeg(C(x)) = n$ is $\oplus \Pe$-complete.
\end{proposition}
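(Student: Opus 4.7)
My plan starts by translating the condition $\mdeg(C(x))=n$ into a parity statement about parse subcircuits. Proposition~\ref{prop:char2} gives $C(x)=\sum_{S\in\cS(C)}m_S(x)$ over $\F_2$, and each monomial $m_S(x)=x_1^{d_1(S)}\cdots x_n^{d_n(S)}$ reduces under $x_i^2\mapsto x_i$ to $\prod_{i:\, d_i(S)\ge 1}x_i$; summing these multilinear reductions over all $S$ yields $M_{C(x)}$. Consequently the coefficient of $x_1\cdots x_n$ in $M_{C(x)}$ equals, modulo $2$, the number of parse subcircuits in which every variable appears, i.e.\ the number of maximal parse subcircuits. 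Hence $\mdeg(C(x))=n$ if and only if $|\{S\in\cS(C):S\text{ is maximal}\}|$ is odd.

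\textbf{Membership in $\oplus\Pe$.} I would then design a nondeterministic polynomial time machine that guesses a total marking $S^{*}\colon G^{+}\to\{\ell,r,*\}$ using $O(|G^{+}|)$ nondeterministic bits and accepts iff (i) $S^{*}$ is closed, i.e.\ $\Dom(S)=V_S\cap G^{+}$ for the accessibility graph built by the recursive definition in Section~\ref{sec:parse_sub}, and (ii) every variable gate $x_1,\dots,x_n$ lies in $V_S$, so that $m_S$ is maximal. Both checks run in polynomial time, and since closed markings biject with parse subcircuits the accepting paths are in one-to-one correspondence with the maximal elements of $\cS(C)$; the decision problem is therefore in $\oplus\Pe$.

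\textbf{$\oplus\Pe$-hardness.} For the hardness direction I would reduce from $\oplus\mathrm{CircuitSAT}$, a standard $\oplus\Pe$-complete problem. Given a Boolean circuit $\phi(y_1,\ldots,y_m)$, arithmetize it over $\F_2$ in the usual way ($\wedge\to\times$, $\neg y\to 1+y$, $y\vee z \to y+z+yz$) to obtain a polynomial size $\F_2$-arithmetic circuit $C$ with $C(a)=\phi(a)$ for every $a\in\F_2^m$. Applying Proposition~\ref{prop:char2} once more and interchanging sums,
\[
\sum_{a\in\F_2^m}C(a)\;=\;\sum_{S\in\cS(C)}\sum_{a\in\F_2^m}m_S(a)\;=\;\sum_{S\in\cS(C)}2^{\,m-r(S)},
\]
where $r(S)$ is the number of distinct variables occurring in $m_S$. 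Modulo $2$ only the maximal parse subcircuits ($r(S)=m$) contribute, so the parity of the satisfying assignments of $\phi$ equals the parity of $|\{S\in\cS(C):S\text{ maximal}\}|$, which by the first paragraph is the indicator of $\mdeg(C)=m$. This completes the reduction without invoking the Chevalley-Warning Theorem, illustrating instead the use of maximal parse subcircuits as advertised.

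\textbf{Main obstacle.} The principal subtlety lies in the membership step: one has to represent each parse subcircuit uniquely so that it is counted exactly once. Using the explicit $*$-symbol on non-accessible sum gates provides a canonical total-function encoding, and closedness is decided in polynomial time by rebuilding the accessibility graph from the root. Beyond this bookkeeping both directions follow essentially mechanically from Proposition~\ref{prop:char2}.
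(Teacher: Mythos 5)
Your proposal is correct, and the membership argument coincides with the paper's: both reduce the question to the parity of the number of maximal parse subcircuits via Proposition~\ref{prop:char2} and then observe that guessing a closed marking and checking maximality is a polynomial-time verifiable relation, so the count sits in $\oplus\Pe$. The hardness direction, however, is genuinely different. The paper reduces from $\oplus 3$-SAT by building a bespoke depth-$3$ circuit (one variable gate per \emph{clause}, one sum gate per Boolean variable) whose parse subcircuits are in explicit bijection with truth assignments and whose maximal parse subcircuits are exactly the satisfying ones; no counting over $\F_2^m$ is needed. You instead arithmetize an arbitrary Boolean circuit and prove the identity ``parity of satisfying assignments $=$ parity of maximal parse subcircuits'' by interchanging the sums $\sum_a\sum_S m_S(a)$ and using $\sum_a m_S(a)=2^{m-r(S)}$. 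This is in effect a parse-subcircuit proof of the $\F_2$ Chevalley--Warning/CNSS correspondence for circuits --- exactly the alternative route the authors mention after the proposition statement and deliberately avoid, since their stated goal is to illustrate maximal parse subcircuits without passing through the satisfying-assignment count. Your version is more general (it works for any arithmetized circuit, not a tailored gadget) and arguably shorter; the paper's version buys a purely syntactic bijection and a concrete illustrative example. One small presentational caveat: your displayed chain of equalities should be read modulo $2$ throughout, since the first step uses Proposition~\ref{prop:char2}, which is an identity in $\F_2$, not in $\Z$; as written it mixes integer and $\F_2$ arithmetic, though the conclusion you draw from it is the correct one.
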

\begin{proof}

For the easiness part, we can define a balanced relation $R(C, S)$ where $S \in \mathcal{S}(C)$, 
which equals $1$ if and only if $S$ is a maximal parse subcircuit.
By Proposition~\ref{prop:char2}, we know that the polynomial computed by the circuit $C$ is 
the sum of all the monomials computed by the parse subcircuits.
Among all the parse subcircuits, only the monomials computed
by maximal parse subcircuits have degree $n$. 
Thus $\textrm{mdeg}(C(x)) = n$ if and only if there is an odd number of maximal parse
subcircuits. 


For the hardness part, we will reduce the well known $\oplus \Pe$-complete problem $\oplus 3$-SAT~\cite{Valiant05}
to the maximality of $\textrm{mdeg}(C(x))$.
Let  $\phi = \{F_1, F_2, \dots, F_m\}$ be an instance of $3$-SAT, where the clause $F_i$ is the conjunction
of three literals 
belonging to $\{x_1, \overline{x_1}, \dots, x_n, \overline{x_n}\}$.
The reduction maps $\phi$ to an $m$-variable, single-output and depth-$3$ arithmetic circuit $C$ defined as follows.
The output gate at level 0 is a product gate. It has $n$ children
$\alpha_1, \dots, \alpha_n$, all plus gates, which compose the first level of the circuit. At level 2, for all $1 \leq j \leq n$,
the gate $\alpha_j$ has two children $x_j$ and $\overline{x_j}$, which are product gates.
The gate $x_j$ is the left child of $\alpha_j$, and $\overline{x_j}$ is its right child.
Finally at level 3 are the $m$ variable gates $F_1, \ldots , F_m$, such that $F_i$ is a child of
$y \in \{x_1, \overline{x_1}, \dots, x_n, \overline{x_n}\}$ if $y \in F_i$ in $\phi$.
The following is an illustration of the circuit which is the image of the formula
$(x_1 \vee x_2 \vee \overline{x_3}) \wedge (\overline{x_2} \vee x_3) \wedge (x_3 \vee \overline{x_1})$ by the reduction.

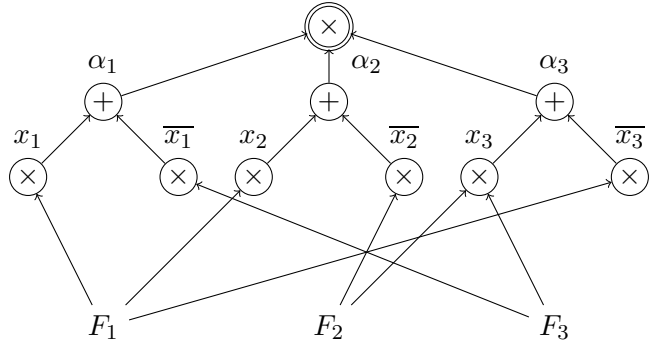
\begin{figure}[H]
	\centering
\begin{tikzpicture}
	[output/.style = {double, double distance = 1pt, circle, draw},
	gate/.style = {circle, draw},
	small/.style = {minimum size = 2pt, inner sep = 2pt},
	ultra small/.style = {minimum size = 1pt, inner sep = 1pt},
	input/.style = {}]
	\node [output, small] (s0) at (0, 0) {$\times$};
	\node [gate, ultra small] (s1) at (-3, -1)  [label = {$\alpha_1$}] {$+$};
	\node [gate, ultra small] (s2) at (0, -1) [label = {[xshift=0.5cm]$\alpha_2$}] {$+$};
	\node [gate, ultra small] (s3) at (3, -1) [label = {$\alpha_3$}] {$+$};
	\node [gate, ultra small] (s4) at (-4, -2) [label = {$x_1$}] {$\times$};
	\node [gate, ultra small] (s5) at (-2, -2) [label = {$\overline{x_1}$}] {$\times$};
	\node [gate, ultra small] (s6) at (-1, -2) [label = {$x_2$}] {$\times$};
	\node [gate, ultra small] (s7) at (1, -2) [label = {$\overline{x_2}$}] {$\times$};
	\node [gate, ultra small] (s8) at (2, -2) [label = {$x_3$}] {$\times$};
	\node [gate, ultra small] (s9) at (4, -2) [label = {$\overline{x_3}$}] {$\times$};
	\node [input] (x1) at (-3, -4) {$F_1$};
	\node [input] (x2) at (0, -4) {$F_2$};
	\node [input] (x3) at (3, -4) {$F_3$};

	\draw [->] (s1) -- (s0);
	\draw [->] (s2) -- (s0);
	\draw [->] (s3) -- (s0);

	\draw [->] (s4) -- (s1);
	\draw [->] (s5) -- (s1);
	\draw [->] (s6) -- (s2);
	\draw [->] (s7) -- (s2);
	\draw [->] (s8) -- (s3);
	\draw [->] (s9) -- (s3);

	\draw [->] (x1) -- (s4);
	\draw [->] (x1) -- (s6);
	\draw [->] (x1) -- (s9);

	\draw [->] (x2) -- (s7);
	\draw [->] (x2) -- (s8);

	\draw [->] (x3) -- (s5);
	\draw [->] (x3) -- (s8);
\end{tikzpicture}
	\caption{Image of $(x_1 \vee x_2 \vee \overline{x_3}) \wedge (\overline{x_2} \vee x_3) \wedge (x_3 \vee \overline{x_1})$
	by the reduction.}
	\label{fig:3sat} 
\end{figure}

We give a one-to-one mapping $S$ from the assignments of $\phi$ to the parse subcircuits of $\mathcal{S}(C)$.
Since all plus gates of $C$ are reachable from the output gate, a parse subcircuit of $C$ is an $\{\ell, r\}$-marking
of the gates $\alpha_1, \dots, \alpha_n$. The parse subcircuits are therefore naturally identified with the elements of 
$\{\ell, r\}^n$.
For an assignment $x \in \{0,1\}^n,$ the map $S$ is defined as
\[ S(x)_i = \left\{ \begin{array}{ll}
	\ell & \textrm{if } x_i = 1\\
	r & \textrm{if } x_i = 0.\\
\end{array} \right. \]

To finish the proof we show that $x$ is a satisfying assignment if and only if $S(x)$ is a maximal parse subcircuit. To see that, observe
that $x$ is a satisfying assignment if and only if each $F_i$ in $\phi$ contains a true literal. By the definition of $S$, the clause
$F_i$ contains a true literal exactly when the variable $F_i$ of $C$ is in the parse subcircuit $C_{S(x)}$.
Since $C_{S(x)}$ is maximal if and only if $F_i$ is in the parse subcircuit $C_{S(x)}$ for all $i$, this concludes the proof.
\end{proof}

\section{PPA-circuits}
\label{sec:ppa-circuit}

Given an arbitrary circuit $C$ and a satisfying assignment, asking
for another satisfying assignment would be an $\NP$-hard problem. We want to restrict
the form of the circuit $C$ in a way which takes into consideration the structure of problems in $\PPA$.
For this, we use repeatedly a $2n$-variable, single-output arithmetic circuit $I$.
The circuit $I$ is of depth 2, its output gate is a product gate with $n$ children, all sum gates. Every sum gate has 3 children,
the left child of the $i$th gate is the variable gate $x_i$, its center child is the variable gate $y_i$, 
and its right child is the constant gate 1. 
For an $n$-variable, $n$-output circuit $C$, we define $I \diamond C$, the {\em diamond composition} of $I$ with $C$,
as the $n$-variable, single-output circuit  composed from a circuit $I$ at the top and $C$ below.
More precisely, the variable gates  of $I \diamond C$ 
labeled by $x_1, \ldots, x_n$ are also 
the first $n$ variables of $I$, and the variable gates $y_1, \ldots, y_n$ of $I$ are identified with the output gates of $C$. 
If $C$ has also a constant gate 1, it is identified with the constant gate 1 of $I$.

 \begin{figure}[H]
	\centering
	 \begin{subfigure}[b]{0.45\textwidth}
		 \centering
\begin{tikzpicture}
	[output/.style = {double, double distance = 1pt, circle, draw},
	gate/.style = {circle, draw},
	input/.style = {}]
	\node [output] (s1) at (0, 0) {$\times$};
	\node [gate] (s2) at (-1, -1) {$+$};
	\node [input] (s3) at (0, -1) {$\cdots$};
	\node [gate] (s4) at (1, -1) {$+$};
	\node [input] (x1) at (-3, -3) {$x_1$};
	\node [input] (x2) at (-2, -3) {$\cdots$};
	\node [input] (x3) at (-1, -3) {$x_n$};
	\node [input] (x4) at (0, -3) {$y_1$};
	\node [input] (x5) at (1, -3) {$\cdots$};
	\node [input] (x6) at (2, -3) {$y_n$};
	\node [input] (x7) at (3, -3) {$1$};

	\draw [->] (s2) -- (s1);
	\draw [->] (s4) -- (s1);
	\draw [->] (x1) -- (s2);
	\draw [->] (x4) -- (s2);
	\draw [->] (x7) -- (s2);
	\draw [->] (x3) -- (s4);
	\draw [->] (x6) -- (s4);
	\draw [->] (x7) -- (s4);
\end{tikzpicture}
	\caption{The arithmetic circuit $I$.}
	 \end{subfigure}
	 \quad
	 \begin{subfigure}[b]{0.45\textwidth}
		 \centering
	\begin{tikzpicture}
		[gate/.style = {circle, draw, fill = white},
		small/.style = {minimum size = 1pt, inner sep = 1pt}]
		\draw (3, 1.5) -- (3, 3) -- (0, 3) -- (0, 1.5) -- (3, 1.5);

		\node at (1.5, 2.25) {$C$};

		\node [] (x1) at (-2.5, 0) {$x_1$};
		\node [] (x2) at (-1.5, 0) {$\cdots$};
		\node [] (x3) at (-0.5, 0) {$x_n$};

		\node (const1) at (3.5, 3) {$1$};
		\node [gate, small] (i1) at (-0.5, 5) {$+$};
		\node (i2) at (0.5, 5) {$\cdots$};
		\node [gate, small] (i3) at (1.5, 5) {$+$};

		\node [gate, small] (t) at (0.5, 6) {$\times$};

		\draw [->] (i1) -- (t);
		\draw [->] (i3) -- (t);

		\node [gate] (d1) at (0.5, 3) {};
		\node [gate, draw = white] (d2) at (1.5, 3) {$\cdots$};
		\node [gate] (d3) at (2.5, 3) {};

		\draw [->] (x1) -- (i1);
		\draw [->] (d1) -- (i1);
		\draw [->] (const1) -- (i1);

		\draw [->] (x3) to [bend left = 25] (i3);
		\draw [->] (d3) -- (i3);
		\draw [->] (const1) -- (i3);

		\node [gate] (x1') at (0.5, 1.5) {};
		\node [gate, draw = white] (x2') at (1.5, 1.5) {$\cdots$};
		\node [gate] (x3') at (2.5, 1.5) {};

		\draw [->] (x1) -- (x1');
		\draw [->] (x3) -- (x3');

	\end{tikzpicture}
	\caption{The compound arithmetic circuit $I \diamond C$.}
	 \end{subfigure}
	\caption{The arithmetic circuits $I$ and $I \diamond C$.} 
\end{figure}
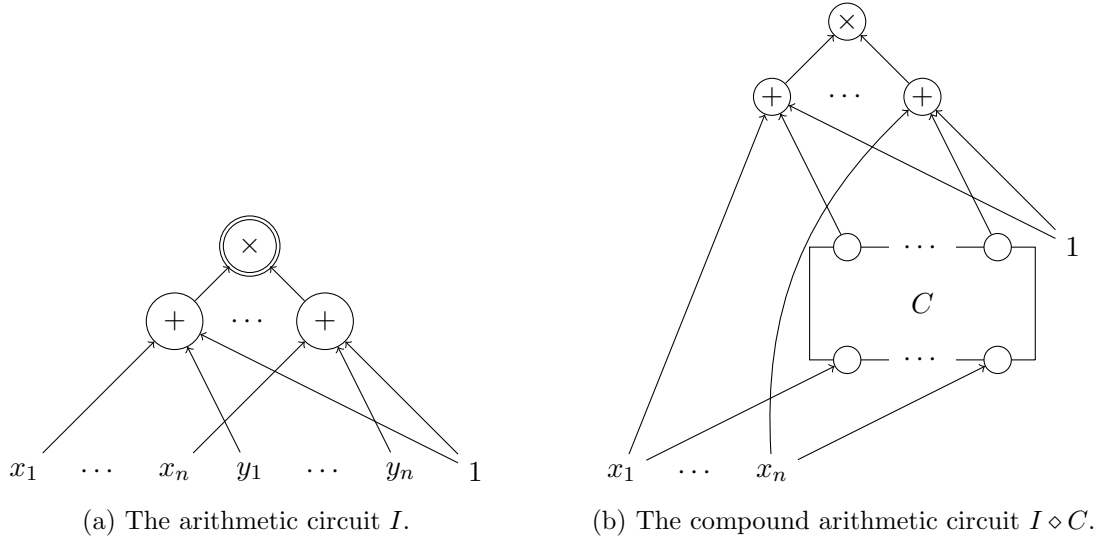

The polynomial computed by the circuit $I$ is 
$I(x_1, \ldots, x_n, y_1, \ldots, y_n) = \prod_{i = 1}^n (x_i + y_i + 1)$. It is easy to check that $I(x,y)$ is 1
if and only if the two $n$-bit strings $x_1, \ldots, x_n$ and $y_1, \ldots, y_n$ are equal.  Therefore $I \diamond C(x) = 1$
if and only if $C(x) = x$.

Given two $n$-variable,
$n$-output arithmetic circuits $D$ and $F$, 
we consider the set of six $n$-variable, single-output circuits 
$$
\ctabc_{D,F} = \{I_1 \diamond D_1 \circ F_1,  I_2 \diamond F_2 \circ D_2, I_3 \diamond D_3 \circ D_4 , I_4 \diamond D_5, I_5 \diamond F_3 \circ F_4, 
I_6 \diamond F_5\},
$$
where $I_1, \ldots , I_6$ are copies of $I$; $D_1, \ldots, D_5$ are copies of $D$; $F_1, \ldots, F_5$ are copies of $F$,
and the six circuits share the same input gates. 
The {\em $\PPA$-composition} 
of $D$ and $F$ is the $n$-variable, single-output circuit $C_{D,F}$
is the disjoint sum of the six circuits in $\ctabc_{D,F}$. 
We call the circuits in $\ctabc_{D,F}$ the 
{\em components} of $C_{D,F}$.
The polynomial computed by $C_{D,F}$ is 
$$
C_{D,F}(x) = I(x, D(F(x))) + I(x, F(D(x))) +  I(x, D(D(x))) +  I(x, D(x))) +  I(x, F(F(x))) +  I(x, F(x))).
$$
\begin{figure}[H]
	\centering
\begin{tikzpicture}[scale = 0.5,
	output/.style = {double, double distance = 1pt, circle, draw},
	gate/.style = {circle, draw},
	input/.style = {},
	small/.style = {minimum size = 2pt, inner sep = 2pt},
	ultra small/.style = {minimum size = 1pt, inner sep = 1pt}]
	\node [output, small] (s) at (12.5, 1) {$+$};

	\begin{scope}[shift = {(0, 0.75)}]
		\node (x1) at (10, -6) {$x_1$};
		\node at (12.5, -6) {$\cdots$};
		\node (xn) at (15, -6) {$x_n$};
	\end{scope}

		\draw (-2, -2) -- (-2, 0) -- (2, 0) -- (2, -2) -- (-2, -2);
		\node at (0, -1) {$I_1 \diamond D_1 \circ F_1$};
		\node [gate, small, fill = white] (c) at (0, 0) {};
		\draw [->] (c) -- (s);
		\node [gate, ultra small, draw = white, fill = white] at (0, -2) {$\cdots$};
		\node [gate, small, fill = white] (d1) at (-1.5, -2) {};
		\node [gate, small, fill = white] (dn) at (1.5, -2) {};
		\draw [->] (x1) -- (d1);
		\draw [->] (xn) -- (dn);

	\begin{scope}[shift = {(5, 0)}]
		\draw (-2, -2) -- (-2, 0) -- (2, 0) -- (2, -2) -- (-2, -2);
		\node at (0, -1) {$I_2 \diamond F_2 \circ D_2$};
		\node [gate, small, fill = white] (c) at (0, 0) {};
		\draw [->] (c) -- (s);
		\node [gate, ultra small, draw = white, fill = white] at (0, -2) {$\cdots$};
		\node [gate, small, fill = white] (d1) at (-1.5, -2) {};
		\node [gate, small, fill = white] (dn) at (1.5, -2) {};
		\draw [->] (x1) -- (d1);
		\draw [->] (xn) -- (dn);
	\end{scope}

	\begin{scope}[shift = {(10, 0)}]
		\draw (-2, -2) -- (-2, 0) -- (2, 0) -- (2, -2) -- (-2, -2);
		\node at (0, -1) {$I_3 \diamond D_3 \circ D_4$};
		\node [gate, small, fill = white] (c) at (0, 0) {};
		\draw [->] (c) -- (s);
		\node [gate, ultra small, draw = white, fill = white] at (0, -2) {$\cdots$};
		\node [gate, small, fill = white] (d1) at (-1.5, -2) {};
		\node [gate, small, fill = white] (dn) at (1.5, -2) {};
		\draw [->] (x1) -- (d1);
		\draw [->] (xn) -- (dn);
	\end{scope}

	\begin{scope}[shift = {(15, 0)}]
		\draw (-2, -2) -- (-2, 0) -- (2, 0) -- (2, -2) -- (-2, -2);
		\node at (0, -1) {$I_4 \diamond D_5$};
		\node [gate, small, fill = white] (c) at (0, 0) {};
		\draw [->] (c) -- (s);
		\node [gate, ultra small, draw = white, fill = white] at (0, -2) {$\cdots$};
		\node [gate, small, fill = white] (d1) at (-1.5, -2) {};
		\node [gate, small, fill = white] (dn) at (1.5, -2) {};
		\draw [->] (x1) -- (d1);
		\draw [->] (xn) -- (dn);
	\end{scope}

	\begin{scope}[shift = {(20, 0)}]
		\draw (-2, -2) -- (-2, 0) -- (2, 0) -- (2, -2) -- (-2, -2);
		\node at (0, -1) {$I_5 \diamond F_3 \circ F_4$};
		\node [gate, small, fill = white] (c) at (0, 0) {};
		\draw [->] (c) -- (s);
		\node [gate, ultra small, draw = white, fill = white] at (0, -2) {$\cdots$};
		\node [gate, small, fill = white] (d1) at (-1.5, -2) {};
		\node [gate, small, fill = white] (dn) at (1.5, -2) {};
		\draw [->] (x1) -- (d1);
		\draw [->] (xn) -- (dn);
	\end{scope}

	\begin{scope}[shift = {(25, 0)}]
		\draw (-2, -2) -- (-2, 0) -- (2, 0) -- (2, -2) -- (-2, -2);
		\node at (0, -1) {$I_6 \diamond F_5$};
		\node [gate, small, fill = white] (c) at (0, 0) {};
		\draw [->] (c) -- (s);
		\node [gate, ultra small, draw = white, fill = white] at (0, -2) {$\cdots$};
		\node [gate, small, fill = white] (d1) at (-1.5, -2) {};
		\node [gate, small, fill = white] (dn) at (1.5, -2) {};
		\draw [->] (x1) -- (d1);
		\draw [->] (xn) -- (dn);
	\end{scope}
	
\end{tikzpicture}
\caption{The circuit $C_{D, F}$, the $\PPA$-composition of the circuits $D$ and $F$.}
\label{fig:PPAcomp}
\end{figure}
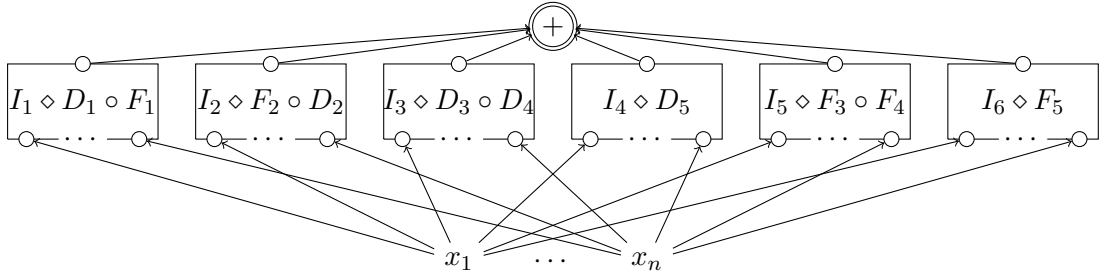

The main structural property of a PPA-composition $C$ is 
that it computes a polynomial whose multilinear degree is 
less than $n$. Moreover, a witness for that can be computed in polynomial time.
By Proposition~\ref{prop:char2}, the multilinear degree of $C(x)$ is determined by the parity of its maximal parse subcircuits, 
$\mdeg(C(x)) =n$ if and only if the parity of the maximal parse subcircuits 
is odd.
Thus, 
the multilinear degree of $C(x)$ can be certified by a special type of 
syntactically defined 
matching over its maximal parse subcircuits.
Formally, a {\em  matching for maximal parse subcircuits in} $C$ is a polynomial time Turing machine $\mu$ which defines
a matching over the maximal parse subcircuits of $C$ as follows:
$S$ and $S'$ are {\em matched} if $\mu(C,S) = S'$ and $\mu(C,S') = S$.
If $\mu$ defines a
perfect matching between the maximal parse subcircuits, then $\mdeg(C(x)) < n$.
If $\mu$ defines a
perfect matching {\em outside} some maximal parse subcircuit $T$, meaning that $T$ is the only maximal parse subcircuit
without a matching pair in $\mu$,
then $\mdeg(C(x)) = n.$

All the above statements hold also for circuits which are the direct sum of a PPA-composition and another circuit which certifiably
has no maximal parse subcircuit. This is obviously the case of circuits which compute polynomials of degree less than $n$.
Our final set of authorized circuits are of this form.
We say that a circuit $C$ is a {\em $\PPA$-circuit} if for some $D$ and $F$, we have
$C = C_{D,F} \oplus C'$,
where $\cdeg(C') < n$. 

In computational problems considered in this paper, we assume that a 
$\PPA$-circuit $C = C_{D,F} \oplus C'$ is expicitly specified
by the circuits $D$, $F$ and $C'$.

\begin{lemma}
\label{lem:matching}
If $C$ is a~$\PPA$-circuit then $\mdeg(C(x)) < n$, and a perfect matching $\mu$ between the maximal parse subcircuits of $C$ can
be computed in polynomial time.
\end{lemma}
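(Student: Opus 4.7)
I proceed in three stages. First, a reduction: since $\cdeg(C')<n$, every parse subcircuit of $C'$ produces a monomial of degree, hence multilinear degree, strictly less than $n$, so $C'$ admits no maximal parse subcircuits. Because $C = C_{D,F} \oplus C'$ has a disjoint-sum root, the maximal parse subcircuits of $C$ are in bijection with those of $C_{D,F}$: any such subcircuit marks this root $\ell$ and restricts to a maximal parse subcircuit of $C_{D,F}$. Since $C_{D,F}$ is itself the disjoint sum of the six components in $\mathcal{C}_{D,F}$, each maximal parse subcircuit of $C_{D,F}$ lies in exactly one of these components, so the task reduces to constructing a polynomial-time-computable matching on the union of the six sets of maximal parse subcircuits.

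Second, I would exploit the designed symmetry of $C_{D,F}$: the six components form three natural pairs
\[
(I_1 \diamond D_1 \circ F_1,\ I_2 \diamond F_2 \circ D_2),\quad (I_3 \diamond D_3 \circ D_4,\ I_4 \diamond D_5),\quad (I_5 \diamond F_3 \circ F_4,\ I_6 \diamond F_5),
\]
mirroring the semantic identities $|\mathrm{fix}(D\circ F)| = |\mathrm{fix}(F\circ D)|$ (via $y \mapsto F(y)$), $|\mathrm{fix}(D\circ D)| \equiv |\mathrm{fix}(D)|\pmod 2$ (via the involution $y \leftrightarrow D(y)$ on $\mathrm{fix}(D\circ D)\setminus \mathrm{fix}(D)$), and the analogous statement with $F$ in place of $D$; together these identities make the total number of satisfying assignments of $C_{D,F}$ even. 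The plan is to lift these three semantic bijections to syntactic involutions on maximal parse subcircuits. For the first pair, this is done by swapping the roles of the outer and inner compositional factors, exploiting that the copies $D_1, D_2$ (and $F_1, F_2$) are isomorphic, so a closed marking of the sum gates of $D_1\circ F_1$ translates directly into a closed marking of the sum gates of $F_2 \circ D_2$. For the second and third pairs, the plan is to identify a canonical pivot — say the smallest top-level $I$-index marked $c$ in some fixed order — and either contract a nested $D$-application (going from component $3$ to component $4$) or expand a single $D$-application into two nested copies (going from $4$ to $3$); the parse subcircuits on which this swap is a fixed point (corresponding semantically to $\mathrm{fix}(D)$) are matched across the two components by the natural inclusion $\mathrm{fix}(D)\hookrightarrow \mathrm{fix}(D\circ D)$, and symmetrically for $F$.

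The main obstacle will be executing these syntactic transformations carefully so that they are honest involutions, always map maximal parse subcircuits to maximal parse subcircuits, and are computable in polynomial time. Parse subcircuits of general (non-formula) arithmetic circuits are closed markings of \emph{all} accessible sum gates and must remain globally consistent across gates shared between outputs, so transplanting a marking between composition copies demands a case analysis driven by the top-level $I$-marks $(L,C,R)$ and by the set of outputs of $D$ and $F$ that the parse subcircuit actually reaches. Once this bookkeeping is verified, the matching $\mu$ clearly runs in polynomial time, since the pivot is found in linear scan and each swap merely copies marks between isomorphic copies of $D$ or $F$. The conclusion $\mdeg(C(x))<n$ then follows immediately from Proposition~\ref{prop:char2}: within each matched pair $\{S,\mu(S)\}$ the monomials $m_S(x)$ and $m_{\mu(S)}(x)$ both reduce multilinearly to $x_1 x_2 \cdots x_n$, so their contributions cancel modulo terms of multilinear degree less than $n$, and no term of multilinear degree $n$ survives in $C(x)$.
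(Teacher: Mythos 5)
Your skeleton matches the paper's: reduce to $C_{D,F}$ because $\cdeg(C')<n$ kills all maximal parse subcircuits of $C'$, split the six components into the same three pairs, and lift the semantic identities ($|\mathrm{fix}(D\circ F)|=|\mathrm{fix}(F\circ D)|$, $|\mathrm{fix}(D\circ D)|\equiv|\mathrm{fix}(D)| \bmod 2$) to syntactic involutions. The final step (a perfect matching on maximal parse subcircuits forces $\mdeg(C(x))<n$ via Proposition~\ref{prop:char2}) is also fine.

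The genuine gap is in the claim that, for the first pair, ``a closed marking of the sum gates of $D_1\circ F_1$ translates directly into a closed marking of the sum gates of $F_2\circ D_2$,'' i.e.\ that the matching on this pair can be a pure cross-component bijection. It cannot: the two components need only have the same \emph{parity} of maximal parse subcircuits, not the same number, so no bijection between the two sets exists in general. Concretely, for a maximal parse subcircuit $S$ of $I\diamond D\circ F$, let $S_{\mathrm{out}}$ be the set of indices $i$ with the top gate $h_i$ marked $c$ (so the output $d_i$ of $D$ is used) and $S_{\mathrm{in}}$ the set of variables actually consumed inside $F$; one only has $S_{\mathrm{out}}\subseteq S_{\mathrm{in}}$. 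When the inclusion is proper, the naive transplant to $F_2\circ D_2$ breaks down (the accessible-gate structure on the other side does not correspond), and such $S$ has no natural partner in the other component. The paper's proof handles exactly these subcircuits by a \emph{within-component} involution: at the smallest $i\in S_{\mathrm{in}}\setminus S_{\mathrm{out}}$ it toggles $h_i$ between its $x_i$-child and its $1$-child, which preserves maximality because $x_i$ is still picked up through $F$. Only the residual subcircuits with $S_{\mathrm{out}}=S_{\mathrm{in}}$ are transferred across components, and proving that this transfer sends closed markings to closed markings (Proposition~\ref{prop:accessibility} in the paper) uses the equality $S_{\mathrm{out}}=S_{\mathrm{in}}$ in an essential way. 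You anticipate that the reached output sets will drive a case analysis, but you present that analysis as bookkeeping for a transplant that always crosses components; the missing idea is that a whole class of maximal parse subcircuits must instead be cancelled internally, and the same within-component toggle is also needed (as a first case) for the $D\circ D$ versus $D$ and $F\circ F$ versus $F$ pairs, on top of the swap/contract moves you describe.
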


\begin{proof}
Let $C = C_{D,F} \oplus C'$ where $\mdeg(C') < n$. We can suppose without less of generality that $C'$ is the empty circuit,
that is $C = C_{D,F}$.
Since the six components of $C$ are pairwise disjoint (except for the input gates), every maximal parse subcircuit in $C$
consists of the mark of the root of $C$ from the set $\{1, \ldots, 6\}$, 
and a 
maximal parse subcircuit in the corresponding component. 
For the definition of $\mu$ we decompose $C$
into the disjoint sum of three circuits $C_1, C_2$ and $C_3$ where each of them is the disjoint sum of two  $\PPA$-components,
and will define the matching  
inside each of these circuits. The three circuits are as follows:
$C_1 = I_1 \diamond D_1 \circ F_1 \oplus I_2 \diamond F_2 \circ D_2$, 
$C_2 = I_3 \diamond D_3 \circ D_4 \oplus I_4 \diamond D_5$, and 
$C_3 = I_5 \diamond F_3 \circ F_4 \oplus I_6 \diamond F_5$. Clearly $C_2$ and $C_3$ are similar,
therefore it is sufficient
to define $\mu$ for $C_1$ and $C_2$.

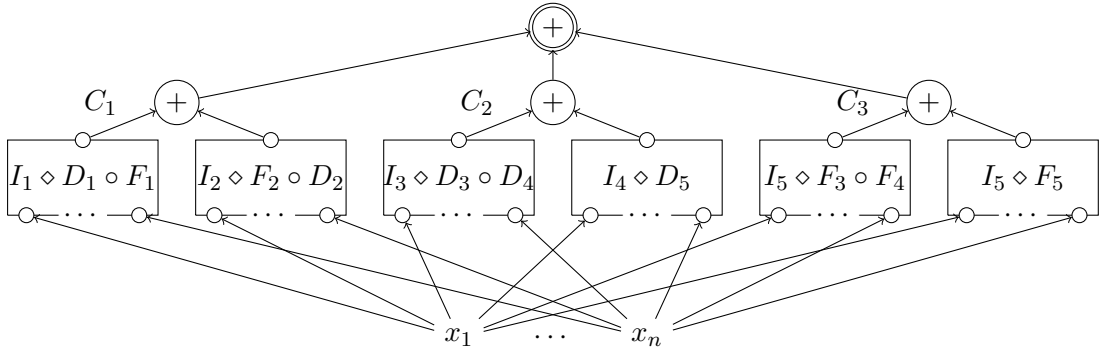
\begin{figure}[H]
	\centering
\begin{tikzpicture}[scale = 0.5,
	output/.style = {double, double distance = 1pt, circle, draw},
	gate/.style = {circle, draw},
	input/.style = {},
	small/.style = {minimum size = 2pt, inner sep = 2pt},
	ultra small/.style = {minimum size = 1pt, inner sep = 1pt}]
	\node [output, small] (s) at (12.5, 3) {$+$};
	\node [gate, small, label = {[label distance = 10pt]left:$C_1$}] (c1) at (2.5, 1) {$+$};
	\node [gate, small, label = {[label distance = 10pt]left:$C_2$}] (c2) at (12.5, 1) {$+$};
	\node [gate, small, label = {[label distance = 10pt]left:$C_3$}] (c3) at (22.5, 1) {$+$};

	\draw [->] (c1) -- (s);
	\draw [->] (c2) -- (s);
	\draw [->] (c3) -- (s);

	\begin{scope}[shift = {(0, 0.75)}]
		\node (x1) at (10, -6) {$x_1$};
		\node at (12.5, -6) {$\cdots$};
		\node (xn) at (15, -6) {$x_n$};
	\end{scope}

		\draw (-2, -2) -- (-2, 0) -- (2, 0) -- (2, -2) -- (-2, -2);
		\node at (0, -1) {$I_1 \diamond D_1 \circ F_1$};
		\node [gate, small, fill = white] (c) at (0, 0) {};
		\draw [->] (c) -- (c1);
		\node [gate, ultra small, draw = white, fill = white] at (0, -2) {$\cdots$};
		\node [gate, small, fill = white] (d1) at (-1.5, -2) {};
		\node [gate, small, fill = white] (dn) at (1.5, -2) {};
		\draw [->] (x1) -- (d1);
		\draw [->] (xn) -- (dn);

	\begin{scope}[shift = {(5, 0)}]
		\draw (-2, -2) -- (-2, 0) -- (2, 0) -- (2, -2) -- (-2, -2);
		\node at (0, -1) {$I_2 \diamond F_2 \circ D_2$};
		\node [gate, small, fill = white] (c) at (0, 0) {};
		\draw [->] (c) -- (c1);
		\node [gate, ultra small, draw = white, fill = white] at (0, -2) {$\cdots$};
		\node [gate, small, fill = white] (d1) at (-1.5, -2) {};
		\node [gate, small, fill = white] (dn) at (1.5, -2) {};
		\draw [->] (x1) -- (d1);
		\draw [->] (xn) -- (dn);
	\end{scope}

	\begin{scope}[shift = {(10, 0)}]
		\draw (-2, -2) -- (-2, 0) -- (2, 0) -- (2, -2) -- (-2, -2);
		\node at (0, -1) {$I_3 \diamond D_3 \circ D_4$};
		\node [gate, small, fill = white] (c) at (0, 0) {};
		\draw [->] (c) -- (c2);
		\node [gate, ultra small, draw = white, fill = white] at (0, -2) {$\cdots$};
		\node [gate, small, fill = white] (d1) at (-1.5, -2) {};
		\node [gate, small, fill = white] (dn) at (1.5, -2) {};
		\draw [->] (x1) -- (d1);
		\draw [->] (xn) -- (dn);
	\end{scope}

	\begin{scope}[shift = {(15, 0)}]
		\draw (-2, -2) -- (-2, 0) -- (2, 0) -- (2, -2) -- (-2, -2);
		\node at (0, -1) {$I_4 \diamond D_5$};
		\node [gate, small, fill = white] (c) at (0, 0) {};
		\draw [->] (c) -- (c2);
		\node [gate, ultra small, draw = white, fill = white] at (0, -2) {$\cdots$};
		\node [gate, small, fill = white] (d1) at (-1.5, -2) {};
		\node [gate, small, fill = white] (dn) at (1.5, -2) {};
		\draw [->] (x1) -- (d1);
		\draw [->] (xn) -- (dn);
	\end{scope}

	\begin{scope}[shift = {(20, 0)}]
		\draw (-2, -2) -- (-2, 0) -- (2, 0) -- (2, -2) -- (-2, -2);
		\node at (0, -1) {$I_5 \diamond F_3 \circ F_4$};
		\node [gate, small, fill = white] (c) at (0, 0) {};
		\draw [->] (c) -- (c3);
		\node [gate, ultra small, draw = white, fill = white] at (0, -2) {$\cdots$};
		\node [gate, small, fill = white] (d1) at (-1.5, -2) {};
		\node [gate, small, fill = white] (dn) at (1.5, -2) {};
		\draw [->] (x1) -- (d1);
		\draw [->] (xn) -- (dn);
	\end{scope}

	\begin{scope}[shift = {(25, 0)}]
		\draw (-2, -2) -- (-2, 0) -- (2, 0) -- (2, -2) -- (-2, -2);
		\node at (0, -1) {$I_5 \diamond F_5$};
		\node [gate, small, fill = white] (c) at (0, 0) {};
		\draw [->] (c) -- (c3);
		\node [gate, ultra small, draw = white, fill = white] at (0, -2) {$\cdots$};
		\node [gate, small, fill = white] (d1) at (-1.5, -2) {};
		\node [gate, small, fill = white] (dn) at (1.5, -2) {};
		\draw [->] (x1) -- (d1);
		\draw [->] (xn) -- (dn);
	\end{scope}

\end{tikzpicture}
	\caption{The decomposition $C = C_1 \oplus C_2 \oplus C_3$.}
	\label{fig:decomposition}
\end{figure}

\noindent
{\bf The matching $\mu$ inside $C_1$}.

To ease the notation, we rename the subcircuits of $C_1$ as
$I \diamond D \circ F$ and $I' \diamond F' \circ D'$, and we suppose that $I \diamond D \circ F$ is the left subcircuit
of $C_1$ and  $I' \diamond F' \circ D'$ is its right subcircuit. Let us denote the output (sum) gate of $C_1$ by $h$, 
the sum gates of $I$ by $h_1, \ldots, h_n$, the output gates of $D$ by $d_1, 
\ldots d_n$, and the output gates of $F$
by $f_1, \ldots, f_n$. 
For every gate $g$ in $I, D$ and $F$, we denote the corresponding 
gate in 
$I', D'$ and $F'$ by $g'$, and we also set $h' = h$. 
Let us recall the $h_i$ has three children, the left child is the input gate $x_i$, the center child is $d_i$,
the $i$th output gate of $D$, 
and its right child is the constant gate 1. A parse subcircuit can map $h_i$ into one of the three
marks $\ell, c$ and $r$, corresponding respectively to its left, center, and right child.

We define $\mu(S)$ for 
the maximal parse subcircuits of $I \diamond D \circ F$,
that is when $S(h) = \ell$. The definition for the case $S(h) = r$ is symmetric. 
Let us first define three sets of indices $S_{\out}, S_{\mi}, S_{\inp} \subseteq [n]$.
Let $S_{\out} = \{ i \in [n] : S(h_i) = c\}$, that is $S_{\out}$ contains those indices $i$ for
which the edge from the $d_i$ to $h_i$ belongs to $S$.
By definition $ i \in S_{\mi}$ if there exists an edge in $S$ from $f_i$ to a gate in $D$.
Finally, $ i \in S_{\inp}$ if there exists an edge in $S$ from $x_i$ to a gate in $F$.
We claim that $S_{\out} \subseteq S_{\inp}$.
This is indeed true, since if there exists $i \in S_{\out} \setminus S_{\inp}$ then the 
monomial $m_S(x)$ wouldn't contain the variable $x_i$, contradicting its maximality.
We are now ready to define $S' = \mu(S)$ by distinguishing two cases, depending on if $S_{\out}$ is a proper subset of $S_{\inp}$ or not.

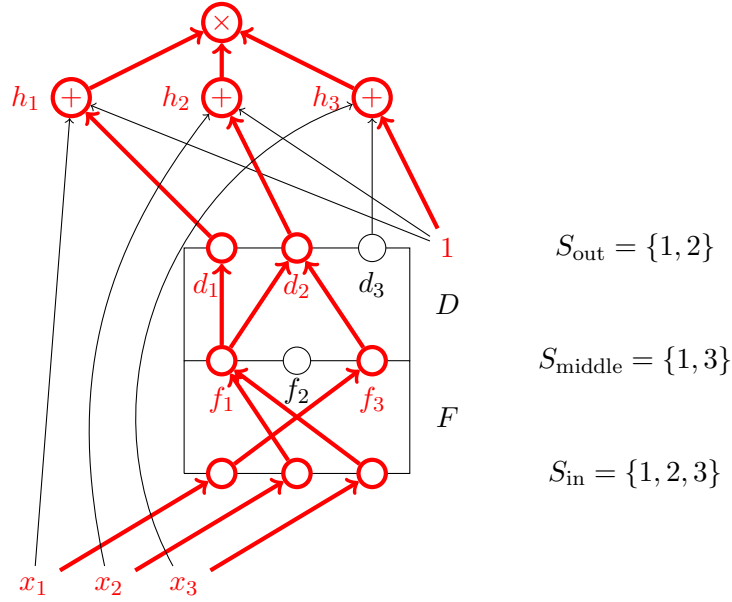
\begin{figure}[H]
	\centering
	\begin{tikzpicture}
		[gate/.style = {circle, draw, fill = white},
		parse/.style = {red, ultra thick},
		small/.style = {minimum size = 1pt, inner sep = 1pt}]
		\draw (0, 0) -- (3, 0) -- (3, 1.5) -- (0, 1.5) -- (0, 0);
		\draw (3, 1.5) -- (3, 3) -- (0, 3) -- (0, 1.5);

		\node [parse] (x1) at (-2, -1.5) {$x_1$};
		\node [parse] (x2) at (-1, -1.5) {$x_2$};
		\node [parse] (x3) at (0, -1.5) {$x_3$};

		\node [parse, gate] (v1) at (0.5, 0) {};
		\node [parse, gate] (v2) at (1.5, 0) {};
		\node [parse, gate] (v3) at (2.5, 0) {};

		\node (c1) at (3.5, 3) {\color{red}$1$};
		\node [parse, gate, small, label = left:{\color{red}$h_1$}] (i1) at (-1.5, 5) {$+$};
		\node [parse, gate, small, label = left:{\color{red}$h_2$}] (i2) at (0.5, 5) {$+$};
		\node [parse, gate, small, label = left:{\color{red}$h_3$}] (i3) at (2.5, 5) {$+$};

		\node [parse, gate, small] (t) at (0.5, 6) {$\times$};

		\draw [->, parse] (i1) -- (t);
		\draw [->, parse] (i2) -- (t);
		\draw [->, parse] (i3) -- (t);

		\draw [->, parse] (x1) -- (v1);
		\draw [->, parse] (x2) -- (v2);
		\draw [->, parse] (x3) -- (v3);

		\node at (3.5, 2.25) {$D$};
		\node at (3.5, 0.75) {$F$};

		\node [parse, gate, label = below:{\color{red}$f_1$}] (f1) at (0.5, 1.5) {};
		\node [gate, label = {[shift = {(0, -0.9)}]$f_2$}] (f2) at (1.5, 1.5) {};
		\node [parse, gate, label = below:{\color{red}$f_3$}] (f3) at (2.5, 1.5) {};

		\node [parse, gate, label = {[shift = {(-0.2, -1)}]{\color{red}$d_1$}}] (d1) at (0.5, 3) {};
		\node [parse, gate,  label = below:{\color{red}$d_2$}] (d2) at (1.5, 3) {};
		\node [gate, label = below:$d_3$] (d3) at (2.5, 3) {};

		\draw [->, parse] (v1) -- (f3);
		\draw [->, parse] (v2) -- (f1);
		\draw [->, parse] (v3) -- (f1);

		\draw [->, parse] (f1) -- (d1);
		\draw [->, parse] (f1) -- (d2);
		\draw [->, parse] (f3) -- (d2);

		\draw [->] (x1) -- (i1);
		\draw [->, parse] (d1) -- (i1);
		\draw [->] (c1) -- (i1);

		\draw [->] (x2) [bend left = 25] to (i2);
		\draw [->, parse] (d2) -- (i2);
		\draw [->] (c1) -- (i2);

		\draw [->] (x3) [bend left = 50] to (i3);
		\draw [->] (d3) -- (i3);
		\draw [->, parse] (c1) -- (i3);

		\node at (6, 0) {$S_{\textrm{in}} = \{1, 2, 3\}$};
		\node at (6, 1.5) {$S_{\textrm{middle}} = \{1, 3\}$};
		\node at (6, 3) {$S_{\textrm{out}} = \{1, 2\}$};
	\end{tikzpicture}
	\caption{The left subcircuit $I \diamond D \circ F$ of $C_1$ and the index sets $S_{\textrm{in}}, S_{\textrm{middle}}$ and $S_{\textrm{out}}$. 
	}
	\label{fig:indexsets}
\end{figure}

\noindent
{\em Case $1$}: $S_{\out} \subset S_{\inp}$.

Let $i$ be the smallest index in $S_{\inp} \setminus S_{\out}$. By definition, we let $S'$ be the same as $S$, except
on $h_i$, where $S'$ takes the mark $r$ when $S(h_i) = \ell$, and it takes the mark $\ell$ when $S(h_i) = r$.
This means that the only difference between $S$ and $S'$ is that at the
$i$th sum gate of $I$, one subcircuit contains the edge from $x_i$ to $h_i$, 
whereas the other contains the edge from 1 to $h_i$. 
$S'$ is therefore a parse subcircuit.
To show that $S'$ is also maximal,
the interesting case is when $S(h_i) = \ell$ and $S'(h_i) = r$, that is $m_{S'}(x)$ doesn't 
directly pick up $x_i$ at $h_i$. But since $i \in S_{\inp}$,
the variable $x_i$ is still 
in $S'$, which is therefore 
maximal. Finally clearly  $\mu(S') = S$.

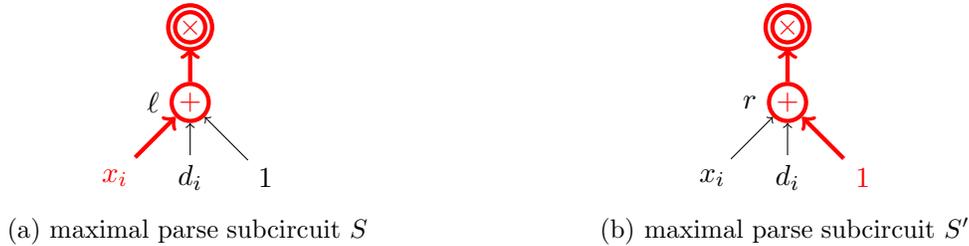
\begin{figure}[H]
	\centering
	\begin{subfigure}[b]{0.45\textwidth}
		\centering
\begin{tikzpicture}
	[output/.style = {double, double distance = 1pt, circle, draw},
	gate/.style = {circle, draw},
	input/.style = {},
	parse/.style = {color = red, ultra thick},
	small/.style = {minimum size = 1pt, inner sep = 1pt}]

	\node [output, parse, small] (s) at (0, 0) {$\times$};
	\node [gate, parse, small, label = left:$\ell$] (h) at (0, -1) {$+$};
	\node [input, parse] (x) at (-1, -2) {$x_i$};
	\node [input] (d) at (0, -2) {$d_i$};
	\node [input] (c) at (1, -2) {$1$};

	\draw [->, parse] (h) -- (s);
	\draw [->, parse] (x) -- (h);
	\draw [->] (d) -- (h);
	\draw [->] (c) -- (h);
\end{tikzpicture}
		\subcaption{maximal parse subcircuit $S$}
	\end{subfigure}
\quad
	\begin{subfigure}[b]{0.45\textwidth}
		\centering
\begin{tikzpicture}
	[output/.style = {double, double distance = 1pt, circle, draw},
	gate/.style = {circle, draw},
	input/.style = {},
	parse/.style = {color = red, ultra thick},
	small/.style = {minimum size = 1pt, inner sep = 1pt}]

	\node [output, parse, small] (s) at (0, 0) {$\times$};
	\node [gate, parse, small, label = left:$r$] (h) at (0, -1) {$+$};
	\node [input] (x) at (-1, -2) {$x_i$};
	\node [input] (d) at (0, -2) {$d_i$};
	\node [input, parse] (c) at (1, -2) {$1$};

	\draw [->, parse] (h) -- (s);
	\draw [->] (x) -- (h);
	\draw [->] (d) -- (h);
	\draw [->, parse] (c) -- (h);
\end{tikzpicture}
		\subcaption{maximal parse subcircuit $S'$}
	\end{subfigure}
	\caption{Case 1  of the matching $\mu$ for $C_1$ where $i$ is the smallest index in 
	$ S_{\textrm{in}} \setminus S_{\textrm{out}}$.}
\end{figure}

\noindent
{\em Case $2$}: $S_{\out} = S_{\inp}$.

In that case first observe that for every index 
$i \not \in S_{\out}$, we have $S(h_i) = \ell$, that is $S$ contains the edge $(x_i, h_i)$, since otherwise $m_S(x)$ wouldn't
contain $x_i$.
By definition, let $\Dom(S') = \{g' \in G^+ : g \in \Dom(S)\}$.
For the output gate $h' =h$ of $C_1$ we set $S'(h') =r$, that is $S'$ will
be a parse subcircuit of $I' \diamond D' \circ F'$. 
For the sum gates $h'_1, \ldots, h'_n$ of $I$, we set $S'(h'_i) = c$ if $i \in S_{\mi}$, and we set
$S'(h'_i) = \ell$ otherwise.
Finally, for every  sum gate $g \in \Dom(S)$ 
in $D$ or in $F$, we set $S'(g') = S(g)$.

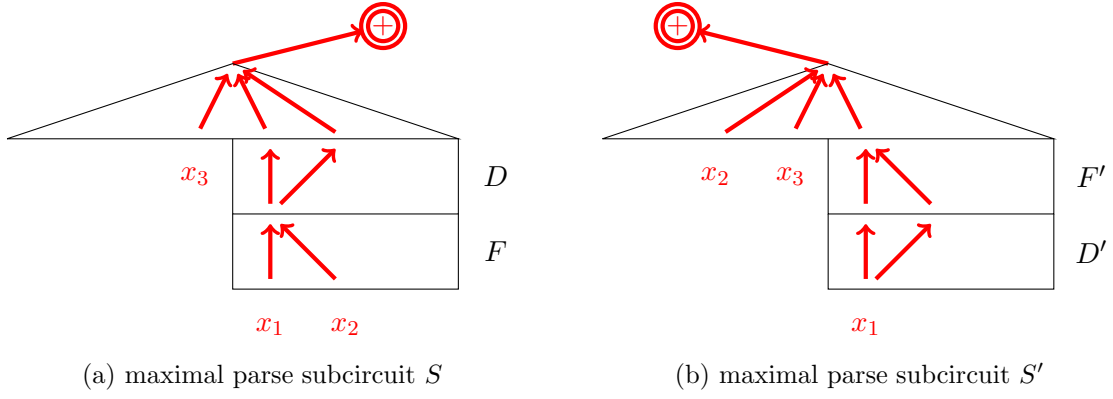
\begin{figure}[H]
	\centering
	\begin{subfigure}[b]{0.45\textwidth}
		\centering
\begin{tikzpicture}
	[output/.style = {double, double distance = 1pt, circle, draw},
	gate/.style = {circle, draw},
	input/.style = {},
	parse/.style = {color = red, ultra thick},
	small/.style = {minimum size = 1pt, inner sep = 1pt}]
	\node [output, parse, small] (s) at (2, 0.5) {$+$};

	\draw [->, parse] (0, 0) -- (s);
		\node at (3.5, -1.5) {$D$};
		\node at (3.5, -2.5) {$F$};
		\node at (-2.5, -1.5) {};
		\node at (-1.5, -1.5) {};
		\node [parse] at (-0.5, -1.5) {$x_3$};
		\node [parse] at (0.5, -3.5) {$x_1$};
		\node [parse] at (1.5, -3.5) {$x_2$};
		\node at (2.5, -3.5) {};
		\draw (-3, -1) -- (0, 0) -- (3, -1) -- (-3, -1);
		\draw (0, -1) -- (0, -2) -- (3, -2) -- (3, -1);
		\draw (0, -2) -- (0, -3) -- (3, -3) -- (3, -2);

	\node (o) at (0, 0) {};
	\node (l1a) at (-0.5, -1) {};
	\node (l1b) at (0.5, -1) {};
	\node (l1c) at (1.5, -1) {};
	\node (l2) at (0.5, -2) {};
	\node (l3a) at (0.5, -3) {};
	\node (l3b) at (1.5, -3) {};

	\draw [->, parse] (l1a) -- (o);
	\draw [->, parse] (l1b) -- (o);
	\draw [->, parse] (l1c) -- (o);
	\draw [->, parse] (l2) -- (l1b);
	\draw [->, parse] (l2) -- (l1c);
	\draw [->, parse] (l3a) -- (l2);
	\draw [->, parse] (l3b) -- (l2);

\end{tikzpicture}
		\subcaption{maximal parse subcircuit $S$}
	\end{subfigure}
	\quad
	\begin{subfigure}[b]{0.45\textwidth}
		\centering
\begin{tikzpicture}
	[output/.style = {double, double distance = 1pt, circle, draw},
	gate/.style = {circle, draw},
	input/.style = {},
	parse/.style = {color = red, ultra thick},
	small/.style = {minimum size = 1pt, inner sep = 1pt}]
	\node [output, parse, small] (s) at (-2, 0.5) {$+$};

	\draw [->, parse] (0, 0) -- (s);
		\node at (3.5, -1.5) {$F'$};
		\node at (3.5, -2.5) {$D'$};
		\node at (-2.5, -1.5) {};
		\node [parse] at (-1.5, -1.5) {$x_2$};
		\node [parse] at (-0.5, -1.5) {$x_3$};
		\node [parse] at (0.5, -3.5) {$x_1$};
		\node at (1.5, -3.5) {};
		\node at (2.5, -3.5) {};
		\draw (-3, -1) -- (0, 0) -- (3, -1) -- (-3, -1);
		\draw (0, -1) -- (0, -2) -- (3, -2) -- (3, -1);
		\draw (0, -2) -- (0, -3) -- (3, -3) -- (3, -2);

	\node (o) at (0, 0) {};
	\node (l1a) at (-1.5, -1) {};
	\node (l1b) at (-0.5, -1) {};
	\node (l1c) at (0.5, -1) {};
	\node (l2a) at (0.5, -2) {};
	\node (l2b) at (1.5, -2) {};
	\node (l3) at (0.5, -3) {};

	\draw [->, parse] (l1a) to 
	(o);
	\draw [->, parse] (l1b) -- (o);
	\draw [->, parse] (l1c) -- (o);
	\draw [->, parse] (l2a) -- (l1c);
	\draw [->, parse] (l2b) -- (l1c);
	\draw [->, parse] (l3) -- (l2a);
	\draw [->, parse] (l3) -- (l2b);

\end{tikzpicture}
		\subcaption{maximal parse subcircuit $S'$}
	\end{subfigure}
	\caption{Case 2 of the matching $\mu$ for $C_1$:
	$S_{\textrm{out}} = S_{\textrm{in}}$.}
\end{figure}

Let us recall that $V_S$ is the set of vertices of the accessibility graph $G_S$ of $S$.
The proof that $S'$ is a maximal parse subcircuit immediately follows from the following proposition.

\begin{proposition}
\label{prop:accessibility}
For every computational gate 
$g$ in $I \diamond D \circ F$, we 
have 
$$g \in V_S \mbox{{\it ~ if and only if ~}} g' \in V_{S'}.$$ 
\end{proposition}

\begin{proof}
We show the implication from left to right.
This is certainly true for the 
computational gates of $I$ since they are all accessible in $G_S$, as well as the 
computational gates of $I'$ in $G_{S'}$.

If $g \in  V_S$ is a computational gate of $D$ then there is a path $p$ in $G_S$ from $g$ to $h$ which can be decomposed into
$p=p_1p_2$, where $p_1$ goes from $g$ to $d_i$ for some $i \in S_{\out}$, and $p_2$ is the path from $d_i$
to $h$.
In $G_{S'}$ we have therefore a path $p'_1$ from $g'$ to $d'_i$. Since $S_{\out} = S_{\inp}$, in $G_{S}$ we have a path
$p_3$ from $x_i$ to $f_j$ for some $j \in S_{\mi}$. Therefore in $G_{S'}$ there exists a path $p'_2$ from 
$d'_i$ to $f'_j$. Finally, in $G_{S'}$ there is also a path $p_3'$
from $f'_j$ to $h'$ because $j \in S_{\mi}$.
Then $p' = p'_1p'_2p'_3$ is a path from $g'$ to $h'$. 

If $g \in  V_S$ is a computational gate of $F$ then there is a path $p$ in $G_S$ from $g$ to $h$ which can be decomposed into
$p=p_1p_2p_3$, where $p_1$ goes from $g$ to $d_i$ for some $i \in S_{\mi}$, 
$p_2$ goes from $d_i$ to $f_j$ for some $j \in S_{\out}$,
and $p_3$ is the path from $f_j$
to $h$. Then in $G_{S'}$ there exists a path $p'_1$ from $g'$ to $d'_i$,
and a path $p'_2$ which goes from $d'_i$ to $h'$ since $i \in S_{\mi}$. Then the path
$p' = p'_1p'_2$ goes from $g'$ to $h'$.


The implication from right to left follows from the symmetry 
between $S$ and $S'$. For this, it is useful to observe that $S'_{\out} = S'_{\inp} = S_{\mi},$ and
$S'_{\mi} = S_{\out} =S_{\inp}.$ 
\end{proof}

We have $\Dom(S) = V_{S} \cap G^+$ since $S$ is a parse subcircuit. 
Proposition~\ref{prop:accessibility} and the definition $\Dom(S') = \{g' \in G^+ : g \in \Dom(S)\}$  imply that 
$\Dom(S') = V_{S'} \cap G^+$, and therefore
$S'$ is a parse subcircuit. To prove the 
maximality of $S'$ 
let us show that every input gate is in $V_{S'}$. If $i \in S_{\mi}$ 
then the path $p$ defined above for the computational
gates in $D$ yields a path $p'$ from $x_i$ to $h'$.
If $i \not \in S_{\mi}$ then the direct path $p'$ from $x_i$ to $h'$ via $h_i'$
exists in $G_{S'}$.
Finally $\mu$ is
clearly  involutive in that case too. 

\bigskip
\noindent
{\bf The matching $\mu$ inside $C_2$}.

We now turn to the description of $\mu$ for $C_2$, where we rename its two subcircuits as
$I \diamond D \circ D'$ and $I^* \diamond D^*$.
The matching for $C_2$ has strong analogies with the matching for $C_1$, 
to better see this
we also 
use the names $I' ,F$ and $F'$ respectively for the circuits $I, D'$ and $D$.
This means that $I \diamond D \circ F$
and 
$I' \diamond F' \circ D'$ 
are just different names for the circuit $I \diamond D \circ D'$.
We suppose that $I \diamond D \circ D'$ is the left subcircuit
of $C_2$ and  $I^*  \diamond D^*$ is its right subcircuit. 
Similarly to the circuit $C_1$,
we denote the output gate of $C_2$ by $h$, 
the sum gates of $I$ by $h_1, \ldots, h_n$, 
the ouput gates of $D$ by $d_1, \ldots d_n$,
and the output gates of $D'$ by $d'_1, \ldots, d'_n$.
For every gate $g$ in $I, D$ and $D'$, we denote the corresponding gate respectively in 
$I', D'$ 
and $D$ by $g'$.
For every gate $g$ in $I$ and $D$, we denote the corresponding gate in $I^*$ and
$D^*$ by $g^*$. 
We also set $h^* = h' = h$. 
Again, $h_i$ has three children, the left child is the input gate $x_i$, the center child is $d_i$, 
the right child is the constant gate 1, and the respective marks are $\ell, c$ and $r$.

We first describe $S' = \mu(S)$ when $S$ is a maximal parse subcircuit of $I \diamond D \circ D'$.
We define $S_{\out}, S_{\mi}, S_{\inp}$ the same way as for the circuit $I \diamond D \circ F$,
keeping in mind that $F =D'$.
As before, we have $S_{\out} \subseteq S_{\inp}$. For the definition of $\mu$ we 
now distinguish three cases.

\bigskip
\noindent
{\em Case $1$}: $S_{\out} \subset S_{\inp}$.

The definition of $S'$ is identical to the first case of the definition of the matching for $C_1$.

%

\bigskip
\noindent
{\em Case $2$}: $S_{\out} = S_{\inp}$ and there exists a sum gate $g$ in $D$ such that $S(g) \neq S(g')$.

The definition of $S'$ is identical to the second case of the definition of the matching for $C_1$, with 
one exception. The difference is that $S'$ remains in the left subcircuit of 
$C_2$, that is 
for the output gate $h' =h$ we set
$S'(h') = \ell$.

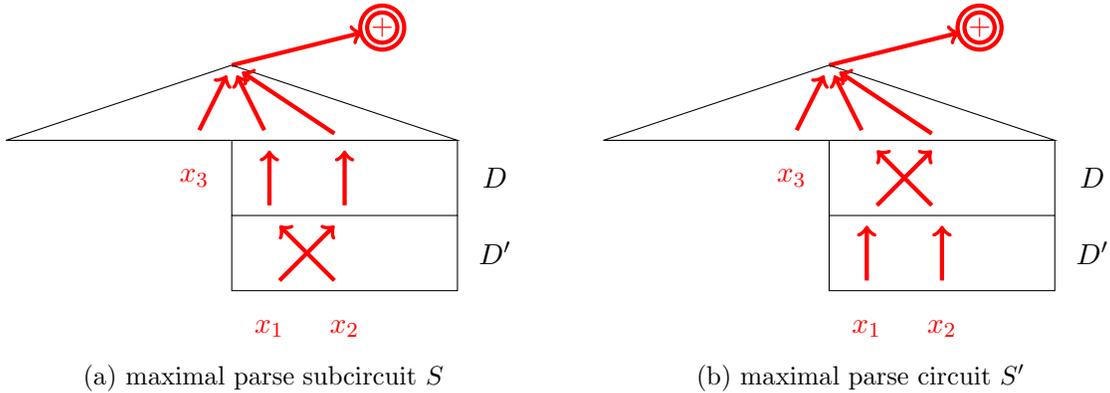
\begin{figure}[H]
	\centering
	\begin{subfigure}[b]{0.45\textwidth}
		\centering
\begin{tikzpicture}
	[output/.style = {double, double distance = 1pt, circle, draw},
	gate/.style = {circle, draw},
	input/.style = {},
	parse/.style = {color = red, ultra thick},
	small/.style = {minimum size = 1pt, inner sep = 1pt}]
	\node [output, parse, small] (s) at (2, 0.5) {$+$};

	\draw [->, parse] (0, 0) -- (s);
		\node at (3.5, -1.5) {$D$};
		\node at (3.5, -2.5) {$D'$};
		\node at (-2.5, -1.5) {};
		\node at (-1.5, -1.5) {};
		\node [parse] at (-0.5, -1.5) {$x_3$};
		\node [parse] at (0.5, -3.5) {$x_1$};
		\node [parse] at (1.5, -3.5) {$x_2$};
		\node at (2.5, -3.5) {};
		\draw (-3, -1) -- (0, 0) -- (3, -1) -- (-3, -1);
		\draw (0, -1) -- (0, -2) -- (3, -2) -- (3, -1);
		\draw (0, -2) -- (0, -3) -- (3, -3) -- (3, -2);

	\node (o) at (0, 0) {};
	\node (l1a) at (-0.5, -1) {};
	\node (l1b) at (0.5, -1) {};
	\node (l1c) at (1.5, -1) {};
	\node (l2a) at (0.5, -2) {};
	\node (l2b) at (1.5, -2) {};
	\node (l3a) at (0.5, -3) {};
	\node (l3b) at (1.5, -3) {};

	\draw [->, parse] (l1a) -- (o);
	\draw [->, parse] (l1b) -- (o);
	\draw [->, parse] (l1c) -- (o);
	\draw [->, parse] (l2a) -- (l1b);
	\draw [->, parse] (l2b) -- (l1c);
	\draw [->, parse] (l3a) -- (l2b);
	\draw [->, parse] (l3b) -- (l2a);

\end{tikzpicture}
		\subcaption{maximal parse subcircuit $S$}
	\end{subfigure}
	\quad
	\begin{subfigure}[b]{0.45\textwidth}
		\centering
\begin{tikzpicture}
	[output/.style = {double, double distance = 1pt, circle, draw},
	gate/.style = {circle, draw},
	input/.style = {},
	parse/.style = {color = red, ultra thick},
	small/.style = {minimum size = 1pt, inner sep = 1pt}]
	\node [output, parse, small] (s) at (2, 0.5) {$+$};

	\draw [->, parse] (0, 0) -- (s);
		\node at (3.5, -1.5) {$D$};
		\node at (3.5, -2.5) {$D'$};
		\node at (-2.5, -1.5) {};
		\node at (-1.5, -1.5) {};
		\node [parse] at (-0.5, -1.5) {$x_3$};
		\node [parse] at (0.5, -3.5) {$x_1$};
		\node [parse] at (1.5, -3.5) {$x_2$};
		\node at (2.5, -3.5) {};
		\draw (-3, -1) -- (0, 0) -- (3, -1) -- (-3, -1);
		\draw (0, -1) -- (0, -2) -- (3, -2) -- (3, -1);
		\draw (0, -2) -- (0, -3) -- (3, -3) -- (3, -2);

	\node (o) at (0, 0) {};
	\node (l1a) at (-0.5, -1) {};
	\node (l1b) at (0.5, -1) {};
	\node (l1c) at (1.5, -1) {};
	\node (l2a) at (0.5, -2) {};
	\node (l2b) at (1.5, -2) {};
	\node (l3a) at (0.5, -3) {};
	\node (l3b) at (1.5, -3) {};

	\draw [->, parse] (l1a) -- (o);
	\draw [->, parse] (l1b) -- (o);
	\draw [->, parse] (l1c) -- (o);
	\draw [->, parse] (l2a) -- (l1c);
	\draw [->, parse] (l2b) -- (l1b);
	\draw [->, parse] (l3a) -- (l2a);
	\draw [->, parse] (l3b) -- (l2b);

\end{tikzpicture}
		\subcaption{maximal parse circuit $S'$}
	\end{subfigure}
	\caption{Case 2 of the matching $\mu$ for $C_2$:
	$S_{\textrm{out}} = S_{\textrm{in}}$ and $\exists g, S(g) \ne S(g').$
	}
\end{figure}



\noindent
{\em Case $3$}: $S_{\out} = S_{\inp}$ and for all sum gate $g$ in $D$, we have $S(g) = S(g')$.

By definition we set $\Dom(S') = \{g^* \in G^+ : g \in \Dom(S)\}$.
For the output gate $h^* =h$ of $C_2$ we set $S'(h^*) =r$, that is $S'$ will
be a parse subcircuit of $I^* \diamond D^*$.  For every other 
sum gate $g \in \Dom(S)$, we set $S'(g^*) = S(g)$. 

The description $S' = \mu(S)$ when $S$ is a maximal parse subcircuit of $I^* \diamond D^*$ is as follows.
By definition we set $\Dom(S')  = 
\{g,g' \in G^+ : g^* \in \Dom(S)\}$. 
We set $S'(h) = \ell$, that is $S'$ is a parse subcircuit of $I \diamond D \circ D'.$ For the sum gates of $I$, we set
$S'(h_i) = S(h^*_i)$. For every sum gate $g^* \in \Dom(S)$ which is in $D^*$, we set $S'(g) = S'(g')=S(g^*)$.

\begin{figure}[H]
	\centering
	\begin{subfigure}[b]{0.45\textwidth}
		\centering
\begin{tikzpicture}
	[output/.style = {double, double distance = 1pt, circle, draw},
	gate/.style = {circle, draw},
	input/.style = {},
	parse/.style = {color = red, ultra thick},
	small/.style = {minimum size = 1pt, inner sep = 1pt}]
	\node [output, parse, small] (s) at (2, 0.5) {$+$};

	\draw [->, parse] (0, 0) -- (s);
		\node at (3.5, -1.5) {$D$};
		\node at (3.5, -2.5) {$D'$};
		\node at (-2.5, -1.5) {};
		\node at (-1.5, -1.5) {};
		\node [parse] at (-0.5, -1.5) {$x_3$};
		\node [parse] at (0.5, -3.5) {$x_1$};
		\node [parse] at (1.5, -3.5) {$x_2$};
		\node at (2.5, -3.5) {};
		\draw (-3, -1) -- (0, 0) -- (3, -1) -- (-3, -1);
		\draw (0, -1) -- (0, -2) -- (3, -2) -- (3, -1);
		\draw (0, -2) -- (0, -3) -- (3, -3) -- (3, -2);

	\node (o) at (0, 0) {};
	\node (l1a) at (-0.5, -1) {};
	\node (l1b) at (0.5, -1) {};
	\node (l1c) at (1.5, -1) {};
	\node (l2a) at (0.5, -2) {};
	\node (l2b) at (1.5, -2) {};
	\node (l3a) at (0.5, -3) {};
	\node (l3b) at (1.5, -3) {};

	\draw [->, parse] (l1a) -- (o);
	\draw [->, parse] (l1b) -- (o);
	\draw [->, parse] (l1c) -- (o);
	\draw [->, parse] (l2a) -- (l1b);
	\draw [->, parse] (l2b) -- (l1c);
	\draw [->, parse] (l3a) -- (l2a);
	\draw [->, parse] (l3b) -- (l2b);

\end{tikzpicture}
		\subcaption{maximal parse subcircuit $S$}
	\end{subfigure}
	\quad
	\begin{subfigure}[b]{0.45\textwidth}
		\centering
\begin{tikzpicture}
	[output/.style = {double, double distance = 1pt, circle, draw},
	gate/.style = {circle, draw},
	input/.style = {},
	parse/.style = {color = red, ultra thick},
	small/.style = {minimum size = 1pt, inner sep = 1pt}]
	\node [output, parse, small] (s) at (-2, 0.5) {$+$};

	\draw [->, parse] (0, 0) -- (s);
		\node at (3.5, -1.5) {$D^*$};
		\node at (-2.5, -1.5) {};
		\node at (-1.5, -1.5) {};
		\node [parse] at (-0.5, -1.5) {$x_3$};
		\node [parse] at (0.5, -2.5) {$x_1$};
		\node [parse] at (1.5, -2.5) {$x_2$};
		\node at (2.5, -2.5) {};
		\draw (-3, -1) -- (0, 0) -- (3, -1) -- (-3, -1);
		\draw (0, -1) -- (0, -2) -- (3, -2) -- (3, -1);

	\node (o) at (0, 0) {};
	\node (l1a) at (-0.5, -1) {};
	\node (l1b) at (0.5, -1) {};
	\node (l1c) at (1.5, -1) {};
	\node (l2a) at (0.5, -2) {};
	\node (l2b) at (1.5, -2) {};

	\draw [->, parse] (l1a) -- (o);
	\draw [->, parse] (l1b) -- (o);
	\draw [->, parse] (l1c) -- (o);
	\draw [->, parse] (l2a) -- (l1b);
	\draw [->, parse] (l2b) -- (l1c);

\end{tikzpicture}
		\subcaption{maximal parse subcircuit $S'$}
	\end{subfigure}
	\caption{Case 3 of the matching $\mu$ for $C_2$:
	$S_{\textrm{out}} = S_{\textrm{in}}$ and 
	$\forall g, S(g) = S(g')$.}
\end{figure}
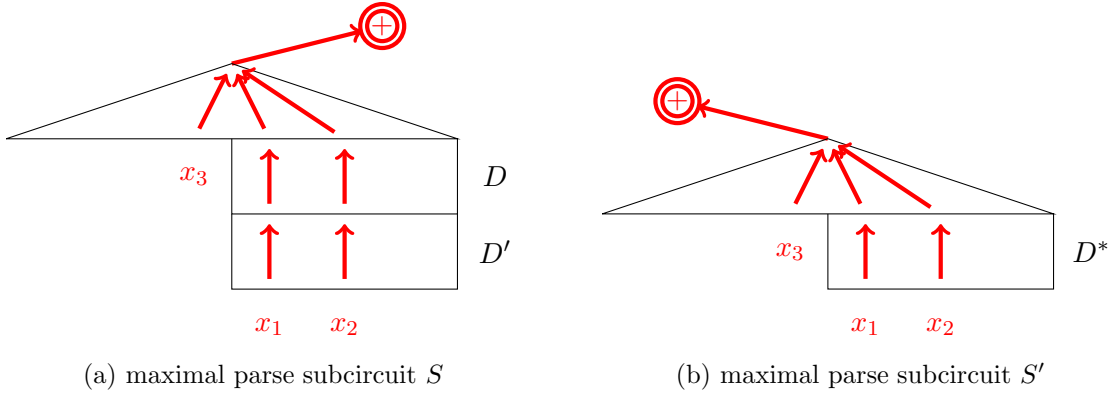

The proof that $S'$ is a maximal parse subcircuit is basically the same as for the case of circuit $C_1$.
It follows immediately from the definition that $\mu$ is an involution.
The only additional point to see is that in the second case $S' \neq S$ 
because $S(g) \neq S(g')$, for some gate $g$ in $D$.
\end{proof}

\section{The computational problems}
\label{sec:problems}
We are now ready to define \textsc{PPA-Circuit CNSS} and \textsc{PPA-Circuit Chevalley},
the two computational problems corresponding to the CNSS and to the
Chevalley-Warning
theorem 
over $\F_2$. The input will be in both cases an $n$-variable, single-output 
PPA-circuit $C$, and an element $a \in \F_2^n$.
In the case of \textsc{PPA-Circuit Chevalley}, it is a zero of $C$, and Lemma~\ref{lem:matching} ensures
that $C$ satisfies the hypotheses of the Chevalley-Warning Theorem.
For \textsc{PPA-Circuit CNSS}, we consider the circuit $C \oplus L_a$, and  Lemma~\ref{lem:matching} again
ensures that this circuit satisfies the hypothesis of the CNSS. The computational task is to compute $b \in \F_2^n$
whose existence is stipulated by these theorems.


The definition of the two problems is the following.

\begin{quote}
\textsc{PPA-Circuit Chevalley}

\textit{Input:} $(C, a)$, where  $C$ is an $n$-variable PPA-circuit over $\F_2$, and
$a$ is a zero of $C$.

\textit{Output:} Another zero $b \neq a$ of $C$.

\end{quote}

\begin{quote}
\textsc{PPA-Circuit CNSS}

\textit{Input:} $(C', a)$, where $C'$ is an $n$-variable PPA-circuit over $\F_2$, and $a \in \F_2^n$.


\textit{Output:} An element $b \in \F_2^n$ satisfying $C = C' \oplus L_a$.
\end{quote}

We stress that that a $\PPA$-circuit $C = C_{D,F} \oplus C_0$
is input as a triple consisting of two
$n$-input $n$-output circuits $D$ and $F$
and an $n$-input single-output $C_0$.

Let us restate here our main theorem. 

\noindent
{\bf Theorem 1.}
{\em The problems \textsc{PPA-Circuit CNSS} and \textsc{PPA-Circuit Chevalley} are $\PPA$-complete.} 

\begin{proof}
In Proposition~\ref{prop:equivalent} below we show that \textsc{PPA-Circuit CNSS} and \textsc{PPA-Circuit Chevalley} are polynomially
interreducible. In 
{ Theorem~\ref{thm:easiness}} in {Section~\ref{sec:easiness}} we prove that \textsc{PPA-Circuit CNSS} is in PPA, and in
{Theorem~\ref{thm:hardness}} in
{Section~\ref{sec:hardness}} we prove that \textsc{PPA-Circuit Chevalley} is PPA-hard.
\end{proof}

We now turn to the proof of the various parts of Theorem~\ref{thm:main}.

\begin{proposition}
\label{prop:equivalent}
\textsc{PPA-Circuit CNSS} and \textsc{PPA-Circuit Chevalley} are polynomially equivalent. 

\end{proposition}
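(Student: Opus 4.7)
The plan is to give, in both directions, a reduction based on the observation that adding the constant $1$ to a PPA-circuit preserves the PPA-circuit property. Indeed, if $C = C_{D,F} \oplus C_0$ with $\cdeg(C_0) < n$, then $C + 1$ can be realized as $C_{D,F} \oplus (C_0 + 1)$ (or simply $C_{D,F} \oplus \mathbf{1}$ if $C_0$ is absent), where the ``low-degree summand'' still has circuit degree strictly less than $n$; hence the new circuit is again a PPA-circuit and its triple representation can be produced in polynomial time from that of $C$. I anticipate no real obstacle beyond this syntactic verification and the corner case $b = a$ in each direction.

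To reduce \textsc{PPA-Circuit Chevalley} to \textsc{PPA-Circuit CNSS}, I would take an input $(C, a)$ with $C(a) = 0$, form $\til{C} := C + 1$ as above, and invoke \textsc{PPA-Circuit CNSS} on $(\til{C}, a)$. A returned solution $b$ satisfies $\til{C}(b) + L_a(b) = 1$. Plugging in $b = a$ gives $\til{C}(a) + L_a(a) = 1 + 1 = 0 \neq 1$, so automatically $b \neq a$; but then $L_a(b) = 0$ and $\til{C}(b) = 1$, i.e.\ $C(b) = 0$, which is the required second zero of $C$.

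For the reduction in the other direction, given $(C', a)$ for \textsc{PPA-Circuit CNSS}, I would first test $C'(a)$ in polynomial time. If $C'(a) = 0$, then $a$ itself is a valid CNSS-solution because $C'(a) + L_a(a) = 0 + 1 = 1$, and I output $a$. Otherwise $C'(a) = 1$, so $a$ is a zero of the PPA-circuit $C' + 1$; I invoke \textsc{PPA-Circuit Chevalley} on $(C' + 1, a)$ to obtain $b \neq a$ with $(C' + 1)(b) = 0$, i.e.\ $C'(b) = 1$. Since $b \neq a$ we also have $L_a(b) = 0$, so $C'(b) + L_a(b) = 1 + 0 = 1$, as required by \textsc{PPA-Circuit CNSS}.
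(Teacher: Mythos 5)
Your proof is correct and follows essentially the same route as the paper: both directions reduce by adding the constant $1$ (which, as you note, keeps the low-degree summand of the PPA-circuit at circuit degree below $n$ and hence preserves the PPA-circuit property), and both use $L_a(b)=0$ for $b\neq a$ to handle the degenerate case. Your choice of $C'+1$ in the CNSS-to-Chevalley direction is in fact the correct reading of the paper's argument, which at that point writes $C\oplus 1$ where $C'\oplus 1$ is clearly intended (otherwise the claimed $C''(a)=0$ would fail and the summand $L_a$ would spoil the PPA-circuit form).
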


\begin{proof}
First we reduce \textsc{PPA-Circuit CNSS} to \textsc{PPA-Circuit Chevalley}. Let 
$(C', a)$ 
be an instance of \textsc{PPA-Circuit CNSS}, and set $C = C' \oplus L_a$.
We can suppose that $C'(a) = 1$, since otherwise we are done. We define the 
circuit $C'' = C \oplus 1$. 
Then clearly $C''$ is a PPA-circuit, and $C''(a) = 0$.
The result of the reduction is then the input $(C'',a)$ to \textsc{PPA-Circuit Chevalley}. If the solution to that input
is another zero $b \neq a$ of $C''(x)$, then clearly $C(b) = 1$. 

The reduction from \textsc{PPA-Circuit Chevalley} 
to \textsc{PPA-Circuit CNSS} is very similar.
Let $(C,a)$ be an instance of \textsc{PPA-Circuit Chevalley}. We set 
$C' = C \oplus 1$, and $C'' = C' \oplus L_a$. Clearly $C'$ is a PPA-circuit. 
The result of the reduction is $(C',a)$. If the solution 
to that input is a satisfying assignment $C''(b) = 1$ then $b$ is a zero of $C$.
Also, $b \neq a$ since $C''(a) = 0$, therefore $b$ is another zero of $C$.
\end{proof}




\section{$\PPA$-easiness}
\label{sec:easiness}

\begin{theorem}
\label{thm:easiness}
\textsc{PPA-Circuit CNSS} is in $\PPA$.
\end{theorem}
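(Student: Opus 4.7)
The plan is to place \textsc{PPA-Circuit CNSS} in $\PPA$ via the generalised formulation of \textsc{Leaf} from~\cite{Pap94}, in which the graph is presented through a polynomial-time edge-recognition algorithm together with a polynomial-time pairing function on every vertex's neighbourhood. Given an input $(C',a)$ with $C' = C_{D,F}\oplus C_0$, let $C = C'\oplus L_a$ and build a graph $G$ with vertex set $V=\F_2^n\sqcup\cS(C)$, where parse subcircuits have polynomial-size encodings as closed markings. The edges of $G$ fall into two classes: an \emph{evaluation edge} $\{v,S\}$ whenever $m_S(v)=1$, and a \emph{matching edge} $\{S,S'\}$ whenever $S$ and $S'$ are the two maximal parse subcircuits of the $C_{D,F}$ summand paired by the algorithm $\mu$ supplied by Lemma~\ref{lem:matching}.

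The degree parities of $G$ follow directly from the machinery already developed. By Proposition~\ref{prop:char2} the degree of an assignment vertex $v$ has the same parity as $\sum_{S\in\cS(C)} m_S(v)=C(v)$, so odd-degree assignments are exactly the satisfying assignments of $C$, i.e.\ the solutions sought. A parse subcircuit $S$ with variable support $T\subseteq[n]$ has evaluation-degree $2^{n-|T|}$, which is odd only when $|T|=n$. Such a maximal $S$ inside $C_{D,F}$ carries one matching edge to $\mu(S)$ and so has degree $2$, and $C_0$ contributes no maximal parse subcircuit at all because $\cdeg(C_0)<n$. Consequently the only odd-degree parse-subcircuit vertex is the unique maximal parse subcircuit $S_a^\star$ of $L_a$, which I designate as the standard leaf.

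Edge recognition is immediate: $m_S(v)$ is evaluated by one walk through the accessibility graph of $S$, and the matching edges are detected by one call to $\mu$. At a parse-subcircuit vertex $S$ with support $T\subsetneq[n]$, the pairing $\phi$ sends $v$ to $v\oplus e_j$ for $j=\min([n]\setminus T)$; this is an involution because flipping $x_j$ leaves $m_S$ unchanged. When $|T|=n$, the unique evaluation-neighbour is $(1,\dots,1)$, which is paired with $\mu(S)$ if $S$ lives in $C_{D,F}$, and left self-mapped if $S=S_a^\star$, so that $S_a^\star$ is correctly flagged as odd-degree.

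The heart of the proof, and the place where I expect the genuine difficulty to live, is the pairing $\phi_v$ at an assignment vertex $v$: a polynomial-time involution on $N(v)=\{S\in\cS(C):m_S(v)=1\}$ with at most one fixed point. I plan to obtain $\phi_v$ by structural recursion on $C$ that mirrors the proof of Proposition~\ref{prop:char2}. At a leaf $N(v)$ is empty or a singleton. At a sum-gated root $N(v)$ splits as $N(C_\ell,v)\sqcup N(C_r,v)$; applying the inductive involutions separately, I pair the two inductive fixed points when both exist (equivalently, when $C_\ell(v)=C_r(v)=1$) so that at most one global fixed point remains. At a product-gated root the elements of $N(v)$ biject with consistent pairs in $N(C_\ell,v)\times N(C_r,v)$, and I combine the Cartesian-product rule $(U,W)\mapsto(\phi_\ell(U),W)$ (with fallback $(U,\phi_r(W))$ when $U$ is fixed by $\phi_\ell$) with the topological-first-disagreement involution $\pi_i$ already used to cancel inconsistent pairs in Proposition~\ref{prop:char2}, chasing alternating edges of the two matchings until a consistent partner is reached. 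The principal obstacle is controlling the length of these chases, and hence showing that $\phi_v$ is indeed polynomial-time computable on an exponentially large $N(v)$; I expect this to require exploiting the specific six-component form of $C_{D,F}$ beyond what Lemma~\ref{lem:matching} alone delivers.
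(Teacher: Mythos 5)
Your overall architecture coincides with the paper's: the same bipartite-style graph on $\F_2^n \cup \cS(C)$, the same two edge types, the same degree-parity analysis (including the observation that a parse subcircuit with support $T$ has evaluation-degree $2^{n-|T|}$), and the same pairing on the parse-subcircuit side. But the step you yourself flag as the heart of the proof --- the polynomial-time involution $\phi_v$ on $N(v)=\{S : m_S(v)=1\}$ --- is left as a plan rather than a proof, and the plan as stated does not go through. Combining the inductive involutions on $N(C_\ell,v)$ and $N(C_r,v)$ with the first-disagreement involution on inconsistent pairs forces you to follow alternating chains through an exponentially large set, and nothing in your sketch bounds the length of those chains; moreover the fallback rule $(U,W)\mapsto(\phi_\ell(U),W)$ need not land on a consistent pair, which is exactly why the chase arises. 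Your guess that the six-component structure of $C_{D,F}$ must be exploited here is also off the mark: in the paper this pairing is constructed for an \emph{arbitrary} circuit, and the PPA-composition structure is used only in Lemma~\ref{lem:matching}.

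The missing idea is a classification of the neighbours of $v$ that makes the pairing a single local swap rather than a chase. The paper proves (Lemma~\ref{whatever}, by induction on the circuit) that if $C(v)=0$ then every $S$ with $m_S(v)=1$ contains a sum gate $g\in\Dom(S)$ with $P_g(v)=0$, while if $C(v)=1$ there is exactly one $S$ with $m_S(v)=1$ all of whose domain gates evaluate to $1$. Given this, $\phi_v(S)$ is defined by locating the \emph{first} sum gate $g$ in a fixed topological order with $P_g(v)=0$ (computable by evaluating all gates at $v$); since $P_g(v)=0$ forces both children of $g$ to evaluate to $1$, the subcircuit rooted at the child not chosen by $S$ has a unique ``all-ones'' parse subcircuit $Z'$, constructible greedily (at a sum gate of value $1$ descend into the child of value $1$, at a product gate of value $1$ take both children), and one swaps the part of $S$ below $g$ for $Z'$. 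This is manifestly polynomial-time, is an involution because the first zero-valued sum gate is unchanged by the swap, and its unique fixed point (when it exists) is precisely the all-ones parse subcircuit, giving $v$ odd degree exactly when $C(v)=1$. Without this lemma or an equivalent device, your construction of $\phi_v$ is not established, so the proof is incomplete at its central step.
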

\begin{proof}
We will give a reduction from \textsc{PPA-Circuit CNSS} to \textsc{Leaf}.
Given an input $N = (C',a)$ to \textsc{PPA-Circuit CNSS}, we set $C = C' \oplus L_a$.
We construct a graph $G_N = (V_N, E_N)$ by a polynomial time edge recognition algorithm and a 
polynomial time pairing function $\phi$ as explained in Section~\ref{sec:prelim_ppa}.
The vertices of $G_N$ are $V_N = \F_2^n \cup \cS(C)$.

There are two types of edges in $E_N$, 
the first type is between an assignment and a parse subcircuit, and the second type is between two maximal parse subcircuits.
By definition, the edge $\{a, S\}$ exists between $a \in \F_2^n$ 
and $S \in \cS(C)$
if $m_S(a) = 1$.  Such an edge can be easily recognized since the monomial $m_S(x)$ can be evaluated in linear time in the
size of $C$. 

Since $C$ is the disjoint sum of $C'$ and  $L_a$, the maximal parse subcircuits of $C$ are the maximal parse subcircuits of $C'$
extended with the appropriate mark at the output gate, and the unique maximal parse subcircuit of $L_a$, again extended with
the appropriate mark at the output gate. Let us denote the latter parse subcircuit by $T$.
Let $\mu$ be a polynomial time computable perfect matching between the maximal parse subcircuits of $C'$, 
which exists by Lemma~\ref{lem:matching}. 
By definition, the edge $\{S, S'\}$ exists between $S, S' \in \cS(C')$ 
if both are extensions of maximal parse subcircuits of $C'$, and their restrictions to $C'$ are matched by $\mu$. 

Observe that 
by Proposition~\ref{prop:char2}, a vertex $a \in \F_2^n$ 
has odd degree
if and only if $C(a) =1$. If $S$ is a maximal parse subcircuit then among the vertices
in $\F_2^n$
it is only connected to $1^n$. If $S \neq T$, then it has one more neighbor, its matching pair given by $\mu$, and therefore
its degree is two.
On the other hand, the degree of $T$ is one and therefore it is odd.
We can therefore take $T$ as the standard leaf.

We first give the pairing for the vertices in $\cS(C)$. We fix $S \in \cS(C)$, and let $a \in
\F_2^n$ 
such that $m_S(a) = 1$. If $S$ is not a maximal parse subcircuit then let $i \in [n]$ be the smallest integer such that
$x_i$ is not in $m_S(x)$, and let $a'$ be obtained from $a$ by flipping the $i$th bit. Then 
by definition $\phi(S, \cdot)$ pairs $a$ with $a'$. 
If $S \neq T$ is a maximal parse subcircuit then it has two neighbors: its matching pair $S'$ by $\mu$ and $1^n$,
and $\phi(S, \cdot)$ pairs these two neighbors. 
For every $S$, the mapping $\phi(S, \cdot)$
is clearly involutive.

We now turn to the more complicated pairing for the vertices in $\F_2^n$. 
Observe that this depends only on the edges of the first type, that is edges between an assignment 
$a \in \F_2^n$ and a parse subcircuit $S \in \cS(C)$. These edges can be defined actually for an arbitrary circuit $C$.
Let us denote by $G(C)$ the graph with vertex set $\F_2^n \cup \cS(C)$ and with edges of the first type from $G_N$.
First we prove the following lemma about $G(C)$ on induction of the size of $C$.
\begin{lemma}
\label{whatever}
For every $n$-variable, single-output circuit $C$, and for every vertex $a \in \F_2^n$ in $G(C)$,
\begin{enumerate}[a)]
\item 
if $\deg(a)$ is even then for all $S \in \cS(C)$ such that $m_S(a) = 1$, there exists $g \in \Dom(S)$ with $P_g(a) = 0$,
\item
if $\deg(a)$ is odd then there exists a unique $S \in \cS(C)$ such that $m_S(a) = 1$, and  $P_g(a) =  1$ for all $g \in \Dom(S)$.
\end{enumerate}
\end{lemma}

\begin{proof}
 If $C$ consists of a single node,
the statement is obviously true. Otherwise we first handle $a)$.
When $\deg(a)$ is even then $C(a) = 0$. If the root is a sum gate then 
we are done since it
is in the domain of every parse subcircuit. If the root is a product gate then at least one of its children (say the left without loss
of generality) also evaluates to $0$, that is ${C_{\ell}}(a) = 0$. Let $S \in \cS(C)$ be such that $m_S(a) = 1$, then we also have
$m_{S_{\ell}}(a) = 1$. By the inductive hypothesis there exists $g \in \Dom(S_{\ell})$ with $P_g(a) = 0$,
and since $g$ is also in the domain of $S$, we are again done.

We now deal with the induction step of $b)$. 
When $\deg(a)$ is odd then $C(a) = 1$. If the root is a sum gate then one of its children
evaluates to 0, and the other one 
to 1, say ${C_{\ell}}(a) = 0$ and ${C_{r}}(a) = 1$. 
By the inductive hypothesis there exists 
a unique $S' \in \cS(C_r)$ such that $m_{S'}(a) = 1$, and  $P_g(a) =  1$ for all $g \in \Dom(S')$.
On the other hand, if
$S \in \cS(C)$ such that $m_S(a) = 1$
and the mark of $S$ at the root is $\ell$, then $S_{\ell} \in \cS(C_{\ell})$ and $m_{S_{\ell}}(a) = 1$, and 
by $a)$ there exists $g \in \Dom(S)$ with $P_g(a) = 0$. 
Therefore the unique $S$ satisfying the
hypothesis of the statement is $S'$ extended with the mark $r$ at the root.

To finish the induction step for $b$), let us suppose now that the root of $C$ is a product gate. 
Then by the inductive hypothesis there exists
a unique $S' \in \cS(C_{\ell})$ such that $m_{S'}(a) = 1$, and  $P_g(a) =  1$ for all $g \in \Dom(S')$, and similarly
there exists a unique $S'' \in \cS(C_r)$ such that $m_{S''}(a) = 1$, and  $P_g(a) =  1$ for all $g \in \Dom(S'')$.
We claim that $S'$ and $S''$ are compatible, and therefore their union $S = S' \cup S''$ is the unique parse subcircuit of $C$
satisfying the claim. Suppose that it is not the case, that is there exists $g \in \Dom(S') \cap \Dom(S'')$ such that
$S'(g) \neq S''(g)$. Since $P_g(a) =  1$, for one of its children, say for $g_{\ell}$, we have 
$P_{g_{\ell}}(a) =  0$, contradicting the inductive hypothesis
about the parse subcircuit in $\{S',S''\}$ which takes the value $\ell$ in $g$.
\end{proof}

We give now the pairing $\phi(a, \cdot)$ for $a \in \F_2^n$.
If $\deg(a)$ is even then
let $S \in \cS(C)$ be such that $m_S(a) = 1$.  By Lemma~\ref{whatever}
there exists a sum gate in the domain of $S$ where $P$ evaluates to 0. 
Let $g$ be in some topological 
ordering of the gates of $C$ the first sum gate
such that $P_g(a) = 0$, and suppose without loss of generality that $S(g) = \ell$.
Let $Z \in \cS(C_g)$ be the restriction of $S$  to $C_g$,
and we obviously have $m_{Z}(a) = m_{Z_{\ell}}(a) = 1.$
We claim that $P_{g_{\ell}}(a) = P_{g_r}(a) = 1.$ Indeed, if $P_{g_{\ell}}(a) = P_{g_r}(a) = 0,$
then by Lemma~\ref{whatever}, applied to $C_{g_{\ell}}$, there exists $g' \in \Dom(Z_{\ell})$ with $P_{g'}(a) = 0$, which contradicts
the choice of $g$. Therefore again by Lemma~\ref{whatever} there exists a  
unique $Z'' \in \cS(C_{g_r})$ such that $m_{Z''}(a) = 1$, and  $P_h(a) =  1$ for all $h \in \Dom(Z'')$.
We let $Z' \in \cS(C_{g})$ be the extension of $Z''$ with $Z'(g) = r$. Finally we define
$\phi(a, S)$ as the parse subcircuit $S'$ obtained from $S$ by exchanging $Z$ with $Z'$, that is
$S' = (S \setminus Z) \cup Z'$. It is clear that $m_{S'}(a) = 1$, and $\phi(a, S') = S$.

If $\deg(a)$ is odd then by Lemma~\ref{whatever} 
there exists a unique parse subcircuit $S$ 
such that $m_S(a) = 1$, and  $P_g(a) =  1$, for all $g \in \Dom(S)$. We set $\phi(a, S) = S.$
For all parse subcircuits $S$ such that $P_g(a) =  0$, for some $g \in \Dom(S)$, the construction
of $S' = \phi(a, S)$ is identical to the previous case. 

The finish the proof, observe that the vertices of odd degree in $V_N$ other than the standard leaf $T$ are the
elements $a \in \F_2^n$ such that $C(a) = 1$. 
Therefore the output of the reduction is a satisfying assignment $a$ for $C$.
\end{proof}

\section{$\PPA$-hardness}
\label{sec:hardness}

\begin{theorem}
\label{thm:hardness}
\textsc{PPA-Circuit Chevalley} is $\PPA$-hard.
\end{theorem}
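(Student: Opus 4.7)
I reduce from the canonical $\PPA$-complete problem \textsc{Leaf}. Given an instance $(z,M,\omega)$ with $n = p(|z|)$, I first convert the polynomial-time Turing machine $M$ into two $n$-input, $n$-output arithmetic circuits $D$ and $F$ over $\F_2$ via a standard Cook--Levin translation (AND $\leftrightarrow$ product, NOT $\leftrightarrow x+1$, OR $\leftrightarrow x+y+xy$), arranged with the following encoding of $M(z,v)$ as $(D(v), F(v))$: if $M(z,v) = \emptyset$, take $D(v) = F(v) = v$; if $M(z,v) = \{w\}$ with $w \neq v$, take $D(v) = w$ and $F(v) = v$; and if $M(z,v) = \{w_1, w_2\}$ with $w_1 \neq w_2$ both different from $v$, take $D(v) = w_1$, $F(v) = w_2$ under some canonical ordering. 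The target circuit is $\mathcal{C} = C_{D,F} \oplus 1$, the disjoint sum of the $\PPA$-composition $C_{D,F}$ with the constant-$1$ circuit; since $\cdeg(1) = 0 < n$, $\mathcal{C}$ is syntactically a $\PPA$-circuit. The reduction outputs $(\mathcal{C}, \omega)$.

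The core of the proof is to establish the parity identity $C_{D,F}(v) \equiv \deg_{G_z}(v) \pmod{2}$ for every $v \in \F_2^n$, which implies $\mathcal{C}(v) = C_{D,F}(v) + 1 = 0$ iff $v$ is a leaf of $G_z$. Recall that
$$C_{D,F}(v) = I(v,D(F(v))) + I(v,F(D(v))) + I(v,D(D(v))) + I(v,D(v)) + I(v,F(F(v))) + I(v,F(v))$$
over $\F_2$, where $I(x,y) = 1$ iff $x = y$. The key observation driving the identity is that under our convention, whenever $v \neq w$ and $v \in M(z,w)$, exactly one of $D(w), F(w)$ equals $v$. I then do a case analysis on $\deg_{G_z}(v) \in \{0,1,2\}$ and on the shape of $M(z,v)$: in every sub-case, the six indicator terms either all vanish, collapse to multiple copies of $I(v,v) = 1$ coming from padded slots, or split into $\{D(w_i), F(w_i)\}$ pairs that each sum to $1$, yielding parity $0$ when $\deg_{G_z}(v) \in \{0,2\}$ and parity $1$ when $\deg_{G_z}(v) = 1$.

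Granting the parity identity, $\omega$ is a leaf, so $\mathcal{C}(\omega) = 0$, making $(\mathcal{C}, \omega)$ a valid instance of \textsc{PPA-Circuit Chevalley}; any solution $b \neq \omega$ is another zero of $\mathcal{C}$, hence another leaf of $G_z$ and a valid \textsc{Leaf} solution. Everything is polynomial time by construction.

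The main obstacle is verifying the parity identity, which depends delicately on the chosen encoding of $M(z,v)$ into $(D(v), F(v))$. The subtle point is that $M(z,v)$ may output candidate vertices $w$ for which the symmetry check $v \in M(z,w)$ fails, so $\{v,w\} \notin E_z$ and they contribute $0$ to $\deg_{G_z}(v)$; the symmetric six-term structure of $C_{D,F}$ must be checked to cancel these spurious contributions in every sub-case, not merely in the cases where all candidates are real neighbours. The remaining ingredients---the Cook--Levin translation into $D$ and $F$, the observation that $C_{D,F} \oplus 1$ meets the syntactic definition of a $\PPA$-circuit, and the polynomial-time bound---are routine.
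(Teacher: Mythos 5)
Your reduction is essentially the paper's own proof: the same encoding of $M(z,v)$ into $(D(v),F(v))$, the same parity identity $C_{D,F}(v)\equiv \deg_{G_z}(v) \pmod 2$ established by a case analysis on $|M(z,v)\setminus\{v\}|$ with exactly the same cancellation mechanism among the six indicator terms, and the same conclusion (the paper leaves the final ``$\oplus\, 1$'' step implicit, which you make explicit). The only nit is that your case list for defining $D$ and $F$ omits the possibility $v\in M(z,v)$ (e.g.\ $M(z,v)=\{w,v\}$ or $M(z,v)=\{v\}$), which the paper handles by effectively working with $M(z,v)\setminus\{v\}$; this is cosmetic.
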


\begin{proof}

We will reduce \textsc{Leaf} to \textsc{PPA-Circuit Chevalley}.  Let $(z,M,\omega)$
be an instance of \textsc{Leaf}, where 
$M$ defines the graph $G_z = (V_z, E_z)$ with $V_z = \{0,1\}^n$, for some polynomial function $n$ of $|z|$,
and $\omega$ is the standard leaf in $G_z$. We know that for every vertex $u$, 
$M(z,u)$ is a set of at most two vertices. Composing the standard simulation of polynomial time Turing machines
by polynomial size boolean circuits~\cite{Sip'97} with the obvious simulation of boolean circuits by arithmetic circuits,
there exist two $n$-variables, $n$-output 
polynomial size arithmetic circuits $D$ and $F$ with the following properties:
\begin{itemize}
\item
if $M(z,u) = \emptyset$ or $M(z,u) = \{u\}$ then $D(u) = F(u) = u$,
\item
if $M(z,u) = \{v\}$ or $M(z,u) = \{v,u\}$ with $v \neq u$ then $D(u) = v$ and $F(u) = u$,
\item
if $M(z,u) = \{v,w\}$ with $v \neq u \neq w$  then $D(u) = v$ and $F(u) = w$ (or vice versa).
\end{itemize}

Consider the $\PPA$-composition $C_{D,F}$ of $D$ and $F$. We claim that for every vertex $u$, the degree of $u$
in $G_z$ is odd if and only if $u$ is a satisfying assignment for $C_{D,F}$. This is equivalent to saying that 
the parity of the degree of $u$ is the same as the parity of the satisfied components of $C_{D,F}$.
The proof of this claim is straightforward, but somewhat tedious.
We distinguish three cases in the proof,
depending on the cardinality of $M(z,u) \setminus \{u\}$.
\begin{itemize}
\item
Case 1: $M(z,u) \setminus \{u\} = \emptyset$. Then $u$ is an isolated vertex, and all six components are satisfied.
\item
Case 2: $M(z,u) \setminus \{u\} = \{v\}$. \\
{\em a}) If $u \in M(z,v)$ then the degree of $u$ is one,
and $I_5 \diamond F_3 \circ F_4, I_6 \diamond F_5$ 
and exactly one of the two components
		$I_2 \diamond F_2 \circ D_2, I_3 \diamond D_3 \circ D_4$ 
		are satisfied. \\
{\em b}) If $u \not \in M(z,v)$ then  $u$ is an isolated vertex, and $I_5 \diamond F_3 \circ F_4$ and $ I_6 \diamond F_5$  are satisfied.
\item
Case 3: $M(z,u) \setminus \{u\} = \{v,w\}$. \\
{\em a}) If $u \in M(z,v) \cap M(z,w)$ then the degree of $u$ is two, and exactly
one of the two components $I_2 \diamond F_2 \circ D_2, I_3 \diamond D_3 \circ D_4$
and exactly one of the two components $I_1 \diamond D_1 \circ F_1, I_5 \diamond F_3 \circ F_4$  are satisfied.\\
{\em b})
If $u \in M(z,v)$ but $u \not \in  M(z,w)$ and say $D(u) = v$, then exactly one of the two components 
$I_2 \diamond F_2 \circ D_2, I_3 \diamond D_3 \circ D_4$ is satisfied. \\
{\em c})
Finally, if $u \not \in M(z,v) \cup M(z,w)$ then $u$ is an isolated vertex, and none of the components is satisfied.
\end{itemize}

\begin{figure}[H]
	\centering
	\begin{subfigure}[b]{0.3\textwidth}
		\centering
\begin{tikzpicture}
	\node[circle, inner sep = 2pt, minimum size = 2pt, draw] (u) at (0, 0) [label=above:$u$] {};
\end{tikzpicture}
		\subcaption{Case $1$}
	\end{subfigure}
	\quad
	\begin{subfigure}[b]{0.3\textwidth}
		\centering
\begin{tikzpicture}
	\node[circle, inner sep = 2pt, minimum size = 2pt, draw] (u) at (0, 0) [label=above:$u$] {};
	\node[circle, inner sep = 2pt, minimum size = 2pt, draw] (v) at (1, 0) [label=above:$v$] {};

	\draw (u) -- (v);
	\node at (1.5, 0) {$\cdots$};
\end{tikzpicture}
		\subcaption{Case $2$-a}
	\end{subfigure}
	\quad
	\begin{subfigure}[b]{0.3\textwidth}
		\centering
\begin{tikzpicture}
	\node[circle, inner sep = 2pt, minimum size = 2pt, draw] (u) at (0, 0) [label=above:$u$] {};
	\node[circle, inner sep = 2pt, minimum size = 2pt, draw] (v) at (1, 0) [label=above:$v$] {};

	\node at (1.5, 0) {$\cdots$};
\end{tikzpicture}
		\subcaption{Case $2$-b}
	\end{subfigure}

	\begin{subfigure}[b]{0.3\textwidth}
		\centering
\begin{tikzpicture}
	\node[circle, inner sep = 2pt, minimum size = 2pt, draw] (u) at (1, 0) [label=above:$u$] {};
	\node[circle, inner sep = 2pt, minimum size = 2pt, draw] (v) at (0, 0) [label=above:$v$] {};
	\node[circle, inner sep = 2pt, minimum size = 2pt, draw] (w) at (2, 0) [label=above:$w$] {};

	\draw (v) -- (u) -- (w);
	\node at (-0.5, 0) {$\cdots$};
	\node at (2.5, 0) {$\cdots$};
\end{tikzpicture}
		\subcaption{Case $3$-a}
	\end{subfigure}
	\quad
	\begin{subfigure}[b]{0.3\textwidth}
		\centering
\begin{tikzpicture}
	\node[circle, inner sep = 2pt, minimum size = 2pt, draw] (u) at (1, 0) [label=above:$u$] {};
	\node[circle, inner sep = 2pt, minimum size = 2pt, draw] (v) at (0, 0) [label=above:$v$] {};
	\node[circle, inner sep = 2pt, minimum size = 2pt, draw] (w) at (2, 0) [label=above:$w$] {};

	\draw (v) -- (u);
	\node at (-0.5, 0) {$\cdots$};
	\node at (2.5, 0) {$\cdots$};
\end{tikzpicture}
		\subcaption{Case $3$-b}
	\end{subfigure}
	\quad
	\begin{subfigure}[b]{0.3\textwidth}
		\centering
\begin{tikzpicture}
	\node[circle, inner sep = 2pt, minimum size = 2pt, draw] (u) at (1, 0) [label=above:$u$] {};
	\node[circle, inner sep = 2pt, minimum size = 2pt, draw] (v) at (0, 0) [label=above:$v$] {};
	\node[circle, inner sep = 2pt, minimum size = 2pt, draw] (w) at (2, 0) [label=above:$w$] {};

	\node at (-0.5, 0) {$\cdots$};
	\node at (2.5, 0) {$\cdots$};
\end{tikzpicture}
		\subcaption{Case $3$-c}
	\end{subfigure}
	\caption{The six cases of Theorem~\ref{thm:hardness}.}
\end{figure}
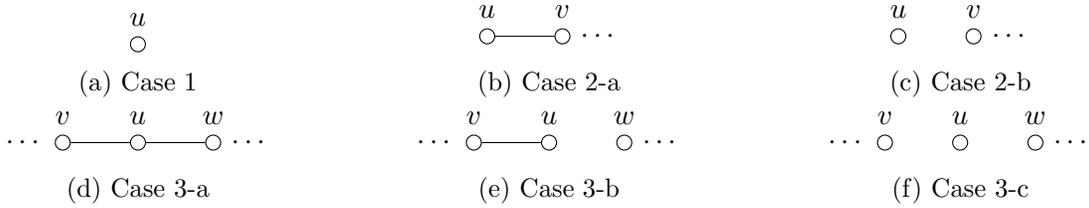

This finishes the proof of the claim. It follows that the number of satisfying assignments for $C_{D,F}$ is equal to the number
of leaves in $G_z$, which is even. The standard leaf $\omega$ is a satisfying 
assignment for $C_{D,F}$, and therefore
the output of \textsc{PPA-Circuit Chevalley} is another satisfying assignment, which is another leaf in $G_z$.
\end{proof}


\section{Acknowledgments}

This research was partially funded by the Singapore Ministry of Education and the
National Research Foundation, also through the Tier 3 Grant ``Random numbers
from quantum processes,'' MOE2012-T3-1-009.  The research was also supported by
the ERC Advanced Grant MQC, 
the French ANR Blanc program under contract ANR-12-BS02-005 (RDAM project), and the
Hungarian National Research, Development and Innovation Office -- NKFIH Grant K115288.

Part of this work was performed when A. B., M. S. and S. Y. attended the program 
``Semidefinite and Matrix Methods for Optimization and Communication" hosted at the Institute for Mathematical 
Sciences, Singapore. We thank the Institute for the hospitality.
G.~I.~is grateful to the Centre for Quantum Technologies, NUS where part of his 
research was accomplished.

We are very grateful to several anonymous referees for numerous insightful comments on the paper.
We would also like to thank Herv\'e Fournier, Guillaume Malod and Sylvain Perifel for several helpful conversations.


\end{document}